\newcommand{\R}{{\mathbb  R}}
\newcommand{\D}{{\mathbb  D}}
\newcommand{\T}{\mathbb{T}}
\newcommand{\te}{{\theta}}
\newcommand{\Z}{{\mathbb  Z}}
\newcommand{\N}{{\mathbb  N}}
\newcommand{\C}{{\mathbb  C}}
\newcommand{\cK}{{\mathscr K}}
\newcommand{\rK}{{\mathcal K}}
\newcommand{\ID}{{\mathbf{1}}}
\newcommand{\OID}{{\mathbf{I}}}
\newcommand{\fdot}{\,\cdot\,}
\def\Ddots{\mathinner{\mkern1mu\raise\p@
\vbox{\kern7\p@\hbox{.}}\mkern2mu
\raise4\p@\hbox{.}\mkern2mu\raise7\p@\hbox{.}\mkern1mu}}
\newcommand{\cH}{\mathcal{H}}
\newcommand{\f}{\varphi}
\newcommand{\p}{\mathbb{P}}
\newcommand{\oa}{\Omega_\alpha}
\newcommand{\sa}{\Sigma_\alpha}
\DeclareMathOperator{\PW}{PW}
\DeclareMathOperator{\re}{Re}
\DeclareMathOperator{\spa}{span}
\DeclareMathOperator{\clos}{clos}
\DeclareMathOperator{\Mod}{mod}
\DeclareMathOperator{\supp}{supp}
\newcommand{\ci}[1]{_{ {}_{\scriptstyle #1}}}
\newcommand{\ti}[1]{_{\scriptstyle \text{\rm #1}}}
\chardef\mathlig@atcode\count255
\def\actively#1#2{\begingroup\uccode`\~=`#2\relax\uppercase{\endgroup#1~}}
\def\mathlig@gobble{\afterassignment\mathlig@next@cmd\let\mathlig@next= }
\def\mathlig@delim{\mathlig@delim}
\def\mathlig@defcs#1{\expandafter\def\csname#1\endcsname}
\def\mathlig@let@cs#1#2{\expandafter\let\expandafter#1\csname#2\endcsname}
\def\mathlig@appendcs#1#2{\expandafter\edef\csname#1\endcsname{\csname#1\endcsname#2}}
\def\mathlig#1#2{\mathlig@checklig#1\mathlig@end\mathlig@defcs{mathlig@back@#1}{#2}\ignorespaces}
\def\mathlig@checklig#1#2\mathlig@end{%
 \expandafter\ifx\csname mathlig@forw@#1\endcsname\relax
 \expandafter\mathchardef\csname mathlig@back@#1\endcsname=\mathcode`#1%
 \mathcode`#1"8000\actively\def#1{\csname mathlig@look@#1\endcsname}%
 \mathlig@dolig#1\mathlig@delim
\fi
\mathlig@checksuffix#1#2\mathlig@end
}
\def\mathlig@checksuffix#1#2\mathlig@end{%
\ifx\mathlig@delim#2\mathlig@delim\relax\else\mathlig@checksuffix@{#1}#2\mathlig@end\fi
}
\def\mathlig@checksuffix@#1#2#3\mathlig@end{%
\expandafter\ifx\csname mathlig@forw@#1#2\endcsname\relax\mathlig@dosuffix{#1}{#2}\fi
\mathlig@checksuffix{#1#2}#3\mathlig@end
}
\def\mathlig@dosuffix#1#2{%
\mathlig@appendcs{mathlig@toks@#1}{#2}%
\mathlig@dolig{#1}{#2}\mathlig@delim
}
\def\mathlig@dolig#1#2\mathlig@delim{%
 \mathlig@defcs{mathlig@look@#1#2}{%
 \mathlig@let@cs\mathlig@next{mathlig@forw@#1#2}\futurelet\mathlig@next@tok\mathlig@next}%
 \mathlig@defcs{mathlig@forw@#1#2}{%
  \mathlig@let@cs\mathlig@next{mathlig@back@#1#2}%
  \mathlig@let@cs\checker{mathlig@chck@#1#2}%
  \mathlig@let@cs\mathligtoks{mathlig@toks@#1#2}%
  \expandafter\ifx\expandafter\mathlig@delim\mathligtoks\mathlig@delim\relax\else
  \expandafter\checker\mathligtoks\mathlig@delim\fi
  \mathlig@next
 }%
 \mathlig@defcs{mathlig@toks@#1#2}{}%
 \mathlig@defcs{mathlig@chck@#1#2}##1##2\mathlig@delim{%
  \ifx\mathlig@next@tok##1%
   \mathlig@let@cs\mathlig@next@cmd{mathlig@look@#1#2##1}\let\mathlig@next\mathlig@gobble
  \fi
  \ifx\mathlig@delim##2\mathlig@delim\relax\else
   \csname mathlig@chck@#1#2\endcsname##2\mathlig@delim
  \fi
 }%
%
 \ifx\mathlig@delim#2\mathlig@delim\else
  \mathlig@defcs{mathlig@back@#1#2}{\csname mathlig@back@#1\endcsname #2}%
 \fi
}%
\mathchardef\ordinarycolon\mathcode`\:
\def\vcentcolon{\mathrel{\mathop\ordinarycolon}}
\numberwithin{equation}{section}
\theoremstyle{plain}
\newtheorem{theo}{Theorem}[section]
\newtheorem{cor}[theo]{Corollary}
\newtheorem{lem}[theo]{Lemma}
\newtheorem{prop}[theo]{Proposition}
\theoremstyle{definition}
\theoremstyle{remark}
\newtheorem{ex}{Example}
\theoremstyle{remark}
\newtheorem*{exs*}{Examples}
\theoremstyle{remark}
\newtheorem*{rems*}{Remarks}
\theoremstyle{remark}
\newtheorem*{rem*}{Remark}
\title[Cyclicity in rank-one perturbation problems]{Cyclicity in rank-one perturbation problems}
\author{Evgeny~Abakumov}
\thanks{The first author was partially supported by the ANR project DYNOP and the
ESF program "Harmonic and Complex Analysis and its
Applications"}
\address{Universit\'e Paris-Est, LAMA (UMR 8050), UPEMLV, UPEC, CNRS,
F-77454, Marne-la-Vall\'ee, France}
\email{evgueni.abakoumov@univ-mlv.fr}
\author{Constanze~Liaw}
\thanks{The second author is  partially supported by the NSF grant DMS-1101477.}
\address{Department of Mathematics, Texas A\&M University, Mailstop 3368,
 College Station, TX  77843, USA }
\email{conni@math.tamu.edu}
\urladdr{http://www.math.tamu.edu/\~{}conni}
\author{Alexei~Poltoratski}
\thanks{The third author is  partially supported by the NSF grant DMS-1101278.}
\address{Department of Mathematics, Texas A\&M University, Mailstop 3368,
 College Station, TX  77843, USA }
\email{alexeip@math.tamu.edu}
\urladdr{http://www.math.tamu.edu/\~{}alexei.poltoratski/}
\keywords{Cyclic vector, rank-one perturbation, random Hamiltonian}
 \subjclass[2010]{47A16, 47B80, 81Q15}
\begin{document}

\begin{abstract}
The property of cyclicity of a linear operator, or equivalently the property of simplicity of its spectrum, is an important
spectral characteristic that appears in many problems of functional analysis and applications to mathematical physics.
In this paper we study cyclicity in the context of rank-one perturbation problems for self-adjoint and unitary operators.
We show that for a fixed non-zero vector the property of being a cyclic vector is not rare, in the sense that for
any family of rank-one perturbations of self-adjoint or unitary operators acting on the space, that vector will be
cyclic for every operator from the family, with a possible exception of a small set with respect to the parameter.
We  discuss applications of our results to Anderson-type Hamiltonians.

\end{abstract}

\maketitle

\

\

\section{Introduction}

\medskip\noindent Consider a self-adjoint operator $T$ on a separable Hilbert space $\cH$. A vector $\f\in \cH$ is called cyclic for an operator $T$, if
\[
\cH = \clos \spa\{(T-\lambda \OID)^{-1}\f:\lambda \in\C\backslash\R\}.
\]
An operator $T$ is called cyclic, if there exists a cyclic vector.
For a bounded operator $T$, an equivalent definition is  that
$$\cH = \clos \spa\{T^n\f:n\in \N\cup\{0\}\},$$
i.e., the span of the orbit of $\f$ under $T$ is dense in the Hilbert space.
Cyclicity of an operator is equivalent to the property that the operator has simple spectrum.
The property of simplicity of the spectrum often appears in problems originating from physics.

\medskip\noindent In this note we study cyclicity in the context of rank-one perturbation problems for self-adjoint and unitary
operators. If $A$ is a self-adjoint operator and  $\f$ is its cyclic vector one can consider the family of rank-one perturbations
\begin{align}\label{d-Rk1}
A_\alpha=A+\alpha(\fdot,\f)\f, \quad\text{for }\alpha\in\R.
\end{align}
Similar families can be defined for unitary operators, see Sections \ref{ss-SaRk1} and \ref{ss-ACT} for the definitions.
We show that the property of cyclicity for a fixed non-zero vector in the Hilbert space is not a rare event, in
the sense that for any family of cyclic rank-one perturbations a fixed vector is cyclic for \textit{all }operators in the family
with an exception of some small sets of parameters.

\medskip\noindent In the rank-one setting, for the singular part we prove that the exceptional set of parameters has Lebesgue measure zero; for the absolutely continuous part we prove "all but countably many"; and in some more special cases even "all but possibly one".

\medskip\noindent In particular, in Section \ref{s-CycAA} we prove that an arbitrary non-zero vector $\f$ in a Hilbert space is cyclic
for all but countably many operators $(A_\alpha)\ti{ac}$ for any family of self-adjoint (or unitary) rank-one perturbations
$A_\alpha$. An arbitrary non-zero vector is also cyclic for almost all operators $(A_\alpha)\ti{s}$ for any such family.
Here $(A_\alpha)\ti{ac}$ and $(A_\alpha)\ti{s}$ denote the absolutely continuous and singular parts of
the operators $A_\alpha$ respectively, see section
\ref{pre-normal} for the definitions.

\medskip\noindent In Subsection \ref{ss-ApplAnd} we discuss applications of our results to Anderson-type Hamiltonians and deduce and improve
some of the results from \cite{JakLast2000, JakLast2006} and \cite{Sim1994}.

\medskip\noindent In Section \ref{s-GDHE}, we show that if a vector belongs to a certain natural class of vectors, associated with the family $A_\alpha$, then we have cyclicity for all except possibly one operator $A_\alpha$.

\medskip\noindent The theory of rank-one perturbations of self-adjoint and unitary operators, and its applications to Anderson-type models became
an active area of research over the last 20 years. The interest to this part of perturbation theory is caused, to a large degree,
by connections to the famous problem of Anderson localization.

\medskip\noindent In 1958 P.~W.~Anderson (see \cite{And1958}) suggested that sufficiently large impurities in a semi-conductor could lead to spatial localization of electrons, called Anderson localization.
Although most physicists consider the problem solved, many mathematical questions with striking physical relevance remain open. The field has grown into a rich mathematical theory (see e.g.~\cite{Germ, Klo2, Kirsh} for different Anderson models and \cite{SIMSULE, ExSpec} for refined notions of Anderson localization).

\medskip\noindent While the property of localization for a random Anderson-type operator has many different definitions, one of the ``weaker" definitions of localization is equivalent to the property
that the spectrum of the operator is almost-surely singular. It is well-known that, if an Anderson-type Hamiltonian is almost-surely singular,
then it is almost-surely cyclic. Equivalently, if such an operator is not cyclic with positive probability, then it is delocalized.

\medskip\noindent The study of spectral behavior under rank-one perturbations proves to be one of the main tools in spectral analysis of
Anderson-type models, in particular in problems concerning cyclicity, see for instance \cite{JakLast2000, JakLast2006, Sim1994}. This connection
served as one of the motivating factors for the current paper.

\subsection{Notation/Beware}
We consider three main classes of families of perturbations: rank-one self-adjoint, rank-one unitary perturbations and Anderson-type Hamiltonians. Throughout we use the notations $A_\alpha$, $U_\gamma$
and $H_\omega$ to denote the corresponding families of perturbations, respectively. While all three classes are somewhat closely related, the two types of rank-one perturbations (\eqref{d-Rk1} and \eqref{d-ugamma} below) are almost interchangeable via the Cayley transform:
The Aleksandrov-Clark theory
and all its basic results discussed in Section \ref{ss-ACT} below can be equivalently re-stated in the case of the real line (upper half-plane).
For example, the Cauchy transform in $\D$, see the second equation of~\eqref{d-CTD}, is replaced with its analogue in $\C_+$, see equation~\eqref{e-F}
for the definition.
Similarly, results
on rank-one perturbations of self-adjoint operators can be re-formulated for the families of unitary rank-one perturbations, see for instance \cite{polt}. It is a well known feature of complex function theory that some of the proofs of the half-plane statements  look more natural in the settings of the unit disk and vice versa. Similarly, in this paper we  utilize both self-adjoint and unitary settings
in our statements and proofs.

\section{Preliminaries}\label{s-PRE}

\subsection{Cyclicity for normal operators}\label{pre-normal}
Recall that an operator in a separable Hilbert space is called normal if $T^*T= TT^*$. By the spectral theorem the operator $T$ is unitarily equivalent to $M_z$, multiplication by the independent variable $z$, in a direct sum of  Hilbert spaces
$$\cH = \oplus \int \cH(z)\, d\mu(z)$$ where $\mu$ is a scalar positive  measure  on $\C$. The measure $\mu$ is called a spectral measure
of $T$.

\medskip\noindent If $T$ is a unitary or self-adjoint operator, its spectral measure $\mu$ is supported on the unit circle or on the real line, respectively.
Via Radon decomposition, $\mu$ can be decomposed into a singular and an absolutely continuous parts $\mu=\mu\ti{s}+\mu\ti{ac}$.
The singular component $\mu\ti{s}$ can be further split into singular continuous and pure point parts.
For unitary or self-adjoint $T$ we denote by $T\ti{ac}$  the restriction of $T$ to its absolutely continuous part, i.e.~$T\ti{ac}$ is unitarily equivalent to $$M_t\big|_{\oplus \int \cH(t) d\mu\ti{ac}(t)}.$$ Similarly, define the singular, singular continuous and the pure point parts of  $T$, denoted by $T\ti{s}$, $T\ti{sc}$ and $T\ti{pp}$, respectively.

\medskip\noindent  In terms of the spectral representation described above, the property of cyclicity, as defined in the introduction,
 is equivalent to the property that  all Hilbert spaces $\cH(t)$ are one-dimensional and  the space
 $$\oplus \int \cH(t) d\mu(t)$$
 can be identified with $L^2(\mu)$.
 Cyclic vectors for $T$ correspond to functions with full support in  $L^2(\mu)$, i.e. those functions that are non-zero almost everywhere with respect to $\mu$.

\subsection{Self-adjoint rank-one perturbations}\label{ss-SaRk1}
Let $T$ be a normal operator on a Hilbert space $\cH$ and let $\f\in \cH$ be a non-zero vector.
An alternative definition of the spectral measure of $T$ can be given as follows. Notice that
 there exists a unique  measure $\mu$ on $\C$ such that
$$((T-\lambda\OID)^{-1} \f,\f)\ci{\cH} = \int \frac{d\mu(t)}{t-\lambda},$$
for all $\lambda$ outside of the spectrum of $T$. If $\mu$ is such a measure we say that $\mu$ is
the spectral measure of $T$ with respect to the vector $\f$. Note that such a measure is unique, once $T$ and $\f$ are fixed.
The operator $T$ is bounded if and only if $\mu$ is compactly supported.

\medskip\noindent Let $A$ be a self-adjoint operator and let $\f$ be its cyclic vector.
Consider the family of rank-one perturbations
\begin{align}\label{d-Rk1}
A_\alpha=A+\alpha(\fdot,\f)\f, \quad\text{for }\alpha\in\R.
\end{align}
It is not difficult to show that then $\f$ will be a cyclic vector for $A_\alpha$ for all $\alpha\in \R$.
Denote by $\mu_\alpha$  the spectral measure of $A_\alpha$ with respect to $\f$. In these notations $\mu=\mu_0$.

\medskip\noindent In virtue of the spectral theorem, one can always assume that $\cH=L^2(\mu)$, $A=M_t$ and $\f = \ID\in L^2(\mu)$.
Denote by $V_\alpha$ the operator of spectral representation for $A_\alpha$, i.e.   the unitary operator $V_\alpha:L^2(\mu)\to L^2(\mu_\alpha)$ such that $V_\alpha A_\alpha= M_t V_\alpha$ and $V_\alpha \f=\ID$.
An explicit formula for $V_\alpha$
was recently derived in \cite{mypaper}.

\medskip\noindent For unbounded $A$ (i.e.~not compactly supported $\mu$), we always assume that the spectral measure $\mu$ corresponding to $\f$ satisfies
$$\int\ci\R \frac{ d\mu(t)}{1+|t|}<\infty.$$
Using the standard terminology, this means that we consider the class of  singular form bounded perturbations and assume
that $\varphi\in\mathcal H_{-1}(A)\supset \cH$, i.e.~that $A\f\in \cH$. Notice that if $\f\notin\mathcal{H}_{-1}(A)$ then the formal expression \eqref{d-Rk1} does not possess a unique self-adjoint extension, see for instance
\cite{Kurasov}.

\medskip\noindent The Aronszajn--Donoghue theory analyzes the spectrum of the perturbed operator under rank-one perturbations. We will use the
following well-known statement:

\begin{theo}[\cite{SIMREV}]\label{t-AD}
For non-equal coupling constants $\alpha\neq\beta$, the singular parts $(\mu_\alpha)\ti{s}$ and $(\mu_\beta)\ti{s}$ are mutually singular.
\end{theo}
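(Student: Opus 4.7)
The plan is to reduce everything to the boundary behavior of the Cauchy (Borel) transform
\[
F_\alpha(z) := \int_\R \frac{d\mu_\alpha(t)}{t-z}, \qquad z\in\C_+,
\]
using the well-known description of the singular part of a positive measure via the non-tangential boundary values of its Cauchy transform: for any (Poisson-finite) positive measure $\nu$ on $\R$ with Cauchy transform $F$, the singular part $\nu\ti{s}$ is carried by the set $\{x\in\R : \lim_{\eps\downarrow 0}\im F(x+i\eps)=+\infty\}$, while the absolutely continuous part is $\pi^{-1}\im F(x+i0)\,dx$. Granted this, the goal becomes showing that the ``$\im F_\alpha(x+i0)=+\infty$'' sets corresponding to distinct parameters $\alpha$ are pairwise disjoint.

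The key algebraic input is the rank-one resolvent (Aronszajn) identity
\[
F_\alpha(z) \;=\; \frac{F(z)}{1+\alpha F(z)}, \qquad z\in\C_+,\ \alpha\in\R,
\]
where $F:=F_0$. I would derive this in one line from the second resolvent identity applied to $A_\alpha-A=\alpha(\,\cdot\,,\f)\f$, using the fact that $\f$ is cyclic so that the scalar pairing with $\f$ determines everything. Taking imaginary parts one gets
\[
\im F_\alpha(x+i\eps) \;=\; \frac{\im F(x+i\eps)}{|1+\alpha F(x+i\eps)|^2}.
\]

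Now I would analyze the possible boundary behavior of $F$ at a point $x\in\R$ where the non-tangential limit $F(x+i0)=a+ib\in\overline{\C_+}$ exists (which it does Lebesgue-a.e., but more importantly $\mu\ti s$-a.e., by the Fatou/de la Vall\'ee Poussin theorem). Two cases arise. \emph{Case 1:} $b=+\infty$. Then the identity above, or a direct inspection of $F_\alpha = F/(1+\alpha F)$, gives $F_\alpha(x+i0)=1/\alpha$ (finite real) if $\alpha\neq 0$, so $\im F_\alpha(x+i0)=0$; only for $\alpha=0$ does the limit remain $+\infty$. \emph{Case 2:} $b<\infty$. Then $\im F_\alpha(x+i0)=+\infty$ forces $|1+\alpha F(x+i0)|^2=0$, i.e.\ $\alpha b=0$ and $1+\alpha a=0$; since $\alpha\neq 0$ this is equivalent to $b=0$ and $a=-1/\alpha$. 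Thus, defining
\[
S_0 := \{x\in\R:\im F(x+i0)=+\infty\},\qquad S_\alpha := \{x\in\R:\im F(x+i0)=0,\ \re F(x+i0)=-1/\alpha\}\ (\alpha\neq 0),
\]
the measure $(\mu_\alpha)\ti s$ is carried by $S_\alpha$. The sets $\{S_\alpha\}_{\alpha\in\R}$ are pairwise disjoint, since the values $-1/\alpha$ are distinct for $\alpha\neq 0$ and $S_0$ is disjoint from every $S_\alpha$ ($\alpha\neq 0$) by the ``$\im F=+\infty$'' versus ``$\im F=0$'' dichotomy. Hence $(\mu_\alpha)\ti s\perp(\mu_\beta)\ti s$ whenever $\alpha\neq\beta$.

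The main obstacle is not really algebraic but analytic: one needs the existence of the non-tangential boundary values of $F$ outside a set that is null for \emph{every} singular part $(\mu_\alpha)\ti s$ simultaneously. This is handled by invoking the classical fact that for a Herglotz function $F$ the limit $F(x+i0)\in\overline{\C_+}$ exists quasi-everywhere with respect to any measure of the form $(\mu_\alpha)\ti s$, because each such measure is carried by the ``infinity set'' of $\im F_\alpha$, and on this set the behavior of $F$ is controlled through the identity $F_\alpha=F/(1+\alpha F)$. Once the boundary values are secured on a set of full $(\mu_\alpha)\ti s$-measure for all $\alpha$, the case analysis above completes the proof.
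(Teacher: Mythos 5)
The paper does not prove this statement at all: it is the classical Aronszajn--Donoghue theorem, quoted from Simon's review \cite{SIMREV}, and your argument is exactly the standard proof of it and is correct --- the Aronszajn--Krein formula $F_\alpha=F/(1+\alpha F)$ together with the fact that $(\mu_\alpha)\ti{s}$ is carried by $\{x:\im F_\alpha(x+i0)=+\infty\}$ localizes each singular part on the pairwise disjoint sets $\{F(x+i0)=-1/\alpha\}$ (and $\{\im F(x+i0)=+\infty\}$ for $\alpha=0$). The one point you flag as the ``main obstacle,'' the existence of the boundary value of $F$ at $(\mu_\alpha)\ti{s}$-a.e.\ point, needs no separate quasi-everywhere convergence theorem: on the carrier of $(\mu_\alpha)\ti{s}$ one has $|F_\alpha(x+i\varepsilon)|\ge\im F_\alpha(x+i\varepsilon)\to\infty$, hence $F(x+i\varepsilon)=\bigl(F_\alpha(x+i\varepsilon)^{-1}-\alpha\bigr)^{-1}\to-1/\alpha$ directly, which also removes the slight circularity of running the case analysis before justifying that the limit exists.
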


Of fundamental importance to many spectral problems is the Cauchy transform.
If $\tau$ is the spectral measure of a self-adjoint operator $A$ corresponding
to the vector $\f$ then the Cauchy transform of $\tau$,
\begin{align}\label{e-F}
\rK_\tau(z)=\frac{1}{\pi}\int_\R \frac{d\tau(t)}{t-z}\,,
\qquad
z\in \C\setminus \R,
\end{align}
 is equal  to the corresponding resolvent function of $A$:
 $$\rK_\tau(z)=((A-z \OID)^{-1}\f,\f)=\int \frac{d\mu(t)}{t-z}.$$
This connection allows one to apply complex analysis in spectral problems.

\medskip\noindent We use the notation
\[
(\rK_\tau)\ci+(x)=\lim_{y\to 0}\rK_\tau(x+ iy)
\qquad\text{and}\qquad
(\rK_\tau)\ci-(x)=\lim_{y\to 0}\rK_\tau(x- iy)
\]
for $x\in \R$.
By a theorem of Privalov,
\begin{align}\label{e-JUMP}
(\rK_\tau)\ci-(x)-(\rK_\tau)\ci+(x)=2\pi i \,\frac{d\tau}{dx}(x)
\end{align}
for Lebesgue-a.e. $x\in\R$.

\subsection{Aleksandrov--Clark theory and unitary rank-one perturbations}\label{ss-ACT}
By $H^p$ we will denote the standard Hardy spaces in the unit disk. Recall that a function $\te\in H^\infty$ is called inner, if $|\te(z)|=1$ for almost every $|z|=1$.  The (scalar valued) model space $K_\theta$ is defined as $K_\te=H^2\ominus \te H^2$. Such spaces play an important role in complex function theory and functional analysis, see for instance \cite{NikolskiTreat}.

\medskip\noindent The model operator on a space $K_\theta$ is defined as $S_\te=P_\te S$, where $P_\te$ denotes the orthogonal projection onto $K_\te$, while $S$ is the shift operator given by $Sf(z)=zf(z)$ for $f\in H^2$. The adjoint to the shift operator is the so-called
backward shift operator defined as
 $$S^*f=\frac{f(z)-f(0)}{z}\,.$$

\medskip\noindent Let $\theta$ be an inner function. To simplify the formulas we will assume that $\theta(0)=0$. In \cite{Clark} Clark showed that the family of rank-one perturbations
\begin{align}\label{unit}
\widetilde U_\gamma&=S_\te + \gamma (\fdot, S^*\te)\ID
\quad\text{for }\gamma\in\T
\end{align}
consists of unitary operators on $K_\te$, and - vice versa - that all unitary rank-one perturbations of the  model operator $S_\te= P_\te S |\ci{K_\te}$ are given by \eqref{unit}.

\medskip\noindent It is well-known that the  the vector $\ID\in K_\theta$ is cyclic for all operators $\widetilde  U_\gamma$ in the above family.
By $\sigma_\gamma$ denote the spectral measures of $\widetilde U_\gamma$ with respect to the function $\ID$, i.e.
such measures that the identity $$((\widetilde U_\gamma+z I)(\widetilde U_\gamma-z I)^{-1}\ID, \ID)= \int\ci\T \frac{\xi+z}{\xi-z} \,d\sigma_\gamma(\xi)$$ holds true for all $\gamma\in\T$. In fact, every inner function $\theta$ with $\theta(0)=0$ corresponds in a one-to-one fashion with a family of Clark measures $\{\sigma_\gamma\}_{\gamma\in\T}$. Further, for inner functions $\theta$, the measures $\sigma_\gamma$ are purely singular for all $\gamma\in \T$; and vice versa.

\medskip\noindent One of the main results of the Clark theory says that the spectral measures of $\widetilde U_\gamma$ are defined by the identity
\begin{align}
\label{thetamu}
 \frac{\theta+\gamma}{\theta-\gamma}
&=
\int\ci\T \frac{\xi+z}{\xi-z} \,d\sigma_\gamma(\xi).
\end{align}

\medskip\noindent The Clark operator
is the unitary operator $\Phi_\gamma: K_\te\to L^2(\sigma_\gamma)$ such that $\Phi_\gamma \widetilde U_\gamma= M_z
\Phi_\gamma$, where $M_z$ is the operator that acts as multiplication by the independent variable in
$L^2(\sigma_\gamma)$. In other words, the Clark operator is the spectral representation of $\widetilde U_\gamma$.

\medskip\noindent Notice that the spectral representation of $V_\alpha:L^2(\sigma)\to L^2(\sigma_\alpha)$ from the previous subsection on self-adjoint rank-one perturbations corresponds to the composition operator $\Phi_\gamma \Phi_{1}^{\ast}:L^2(\sigma_1)\to L^2(\sigma_\gamma)$ in the case of unitary rank-one perturbations.

\medskip\noindent The situation becomes more complicated without the assumption that the spectral measure is purely singular: In the case of non-trivial absolutely continuous spectrum the model space consists of pairs of functions analytic inside and outside of the unit disk, see
\cite{NikolskiII}.

\medskip\noindent However, many of the formulas of the Aleksandrov-Clark theory remain valid in the case where the spectral measures are not purely singular. If $\mu$ is a positive finite measure on the unit circle,
denote by $U_1$ the operator of multiplication by $z$ in $L^2(\mu)$. Let $\theta$ be a bounded holomorphic function in the unit disk $\D$
that satisfies \eqref{thetamu} for $\gamma=1$ and $\sigma_1=\mu$. If $\mu$ is not singular, $\theta$ is not inner but still
belongs to the unit ball of $H^\infty$, i.e. $|\theta|\leq 1$ in $\D$. Nonetheless, one
can still consider a family of measures $\sigma_\gamma$ defined by \eqref{thetamu}. This family will consist of spectral measures of unitary rank-one perturbations
of $U_1$ corresponding to the vector $\ID\in L^2(\mu)$, with $U_\gamma$  defined as
\begin{equation}\label{d-ugamma}
U_\gamma=U_1+(\gamma-1)(\cdot, U_1^*\ID)\ID,\ \gamma\in\T,
\end{equation}
see \cite{poltsara2006}.

\medskip\noindent We will also use the following two theorems from the Aleksandrov-Clark theory. Let $m$ denote the Lebesgue measure on $\T$.

\begin{theo}[Aleksandrov's spectral averaging, see e.g.~\cite{poltsara2006}]\label{t-ASA}
For $f\in L^1(\T, dm)$ we have
\[
\int f dm = \int \left(\int f d\sigma_\gamma \right)dm(\gamma).
\]
\end{theo}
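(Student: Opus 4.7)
The plan is to verify the averaging identity first on the family of Poisson kernels at interior points of $\D$, where it becomes a direct consequence of \eqref{thetamu}, and then to bootstrap to general $f\in L^1(\T, dm)$ via two density steps. Throughout, write $P_w(\xi) = \re\tfrac{\xi+w}{\xi-w} = \tfrac{1-|w|^2}{|\xi-w|^2}$ for the Poisson kernel at $w\in\D$. Fixing $z\in\D$ and taking real parts in \eqref{thetamu}, a direct computation that uses $|\gamma|=1$ shows that
\[
\int_\T P_z(\xi)\, d\sigma_\gamma(\xi) \;=\; \frac{1-|\theta(z)|^2}{|\gamma-\theta(z)|^2} \;=\; P_{\theta(z)}(\gamma),
\]
i.e., the integral against $\sigma_\gamma$ of the Poisson kernel at $z$ is itself a Poisson kernel, now with ``pole'' at $\theta(z)\in\clos\D$ and evaluated at $\gamma$. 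Integrating this identity in $\gamma$ against $dm$ and using the elementary fact $\int_\T P_w\,dm=1$ for every $w\in\D$ produces $\int_\T(\int_\T P_z\,d\sigma_\gamma)\,dm(\gamma)=1=\int_\T P_z\,dm$, which is the claim for $f=P_z$.

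Next I would extend the identity to $f\in C(\T)$. Finite real linear combinations of $\{P_z\}_{z\in\D}$ are uniformly dense in $C(\T,\R)$, for instance because the Poisson extension of any continuous boundary datum converges uniformly on $\clos\D$ to its boundary values and is itself a Riemann-sum limit of $P_z$'s, or alternatively by applying Stone--Weierstrass to the algebra generated by real parts of Cauchy kernels. Since each $\sigma_\gamma$ is a probability measure, both sides of the averaging identity depend on $f$ as contractions in the sup-norm, so the identity persists through uniform limits and hence holds for every $f\in C(\T)$. The passage from $C(\T)$ to $L^1(\T, dm)$ is then standard monotone-class machinery: for nonnegative $f\in L^1$ pick $f_n\in C(\T)$ with $0\le f_n\uparrow f$ $dm$-a.e.\ and apply monotone convergence twice, once inside and once outside the $\gamma$-integral; a general complex-valued $f$ reduces to this via the usual decomposition into positive and negative real and imaginary parts.

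The chief delicate point will be the final $L^1$ extension. An element of $L^1(\T,dm)$ is only defined $m$-a.e., whereas individual $\sigma_\gamma$ can be mutually singular with respect to $m$, so the inner integral $\int f\, d\sigma_\gamma$ carries no a priori pointwise meaning. What the monotone-convergence argument must simultaneously deliver is (i) an $m$-a.e.-defined, $m$-measurable representative of $\gamma\mapsto\int f\, d\sigma_\gamma$ that is finite almost everywhere, and (ii) independence of this representative from the approximating sequence $\{f_n\}$. Both are ensured by the bound $\iint|f|\, d\sigma_\gamma\, dm(\gamma)=\int|f|\,dm<\infty$, which is exactly the identity applied to $|f|$ obtained in the monotone limit.
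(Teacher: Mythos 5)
The paper does not prove Theorem \ref{t-ASA}; it quotes it from \cite{poltsara2006}, so there is no internal argument to compare against, and your proposal has to be judged on its own terms. Your first two steps are correct and are the standard opening of Aleksandrov's argument: taking real parts in \eqref{thetamu} gives $\int_\T P_z\,d\sigma_\gamma=P_{\theta(z)}(\gamma)$ (your formula corresponds to the usual normalization $\frac{\gamma+\theta}{\gamma-\theta}$; as printed, the two sides of \eqref{thetamu} have real parts of opposite signs, but that is a typo of the paper, not an error of yours), integration in $\gamma$ yields the identity for Poisson kernels, and the Riemann-sum argument does show that real linear spans of Poisson kernels are uniformly dense in $C(\T,\R)$, so the identity extends to $C(\T)$ by the bound $|\int f\,d\sigma_\gamma|\le\|f\|_\infty$. (The Stone--Weierstrass aside should be dropped: products of Poisson kernels are not in their linear span, so Stone--Weierstrass would only give density of the generated algebra, not of the span you need. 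The Riemann-sum route suffices.)

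The genuine gap is the passage from $C(\T)$ to $L^1(m)$. First, a nonnegative $f\in L^1(m)$ need not admit continuous $f_n$ with $0\le f_n\uparrow f$ $m$-a.e.: an increasing limit of continuous functions is lower semicontinuous, and the indicator of a closed nowhere dense set of positive measure is not $m$-a.e.\ equal to any lower semicontinuous function. Second, and more seriously, even when such a sequence exists, monotone convergence \emph{inside} the $\gamma$-integral requires $f_n\uparrow f$ $\sigma_\gamma$-a.e.; for inner $\theta$ every $\sigma_\gamma$ is singular with respect to $m$, so $m$-a.e.\ convergence gives no control on the exceptional set as seen by $\sigma_\gamma$. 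This is precisely the difficulty that makes the theorem nontrivial, and your closing paragraph is circular on this point: the bound $\iint|f|\,d\sigma_\gamma\,dm(\gamma)=\int|f|\,dm$ that is supposed to legitimize the limit is itself the theorem applied to $|f|$. The standard repair is to insert an intermediate layer: pass from $C(\T)$ to indicators of open sets $U$ by choosing continuous $f_n\uparrow\chi_U$ pointwise \emph{everywhere} (possible exactly because $\chi_U$ is lower semicontinuous), so that monotone convergence is valid under every $\sigma_\gamma$ simultaneously; extend to $G_\delta$ and $F_\sigma$ sets by dominated convergence (each $\sigma_\gamma$ is a probability measure); deduce the key lemma that a Borel set $E$ with $m(E)=0$ satisfies $\sigma_\gamma(E)=0$ for $m$-a.e.\ $\gamma$, which is what makes $\gamma\mapsto\int f\,d\sigma_\gamma$ well defined a.e.\ independently of the chosen representative; and only then reach general Borel $f\ge 0$ through simple functions, whose approximating sequences converge everywhere. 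With that layer supplied, your outline closes.
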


\medskip\noindent It is well known that the adjoint $\Phi_\gamma^*: L^2(\sigma_\gamma)\to K_\te$ of the Clark operator can be represented using the normalized Cauchy transform
\[
\Phi_\gamma^* h= \frac{\cK_{h\sigma_\gamma}}{\cK_{\sigma_\gamma}}\,,\]
where $\cK$ stands for the Cauchy transform in $\D$:
\begin{equation}\label{d-CTD}
\cK_{\sigma_\gamma}(z)=\int\ci\T \frac{d\sigma_\gamma(\xi)}{1-\bar\xi z}\,,
\qquad\textrm{and}\qquad
\cK_{h\sigma_\gamma}(z)=\int\ci\T \frac{h(\xi)d\sigma_\gamma(\xi)}{1-\bar\xi z}\,.
\end{equation}

\begin{theo}[\cite{NONTAN}]\label{t-P}
For any $f\in L^1(\sigma_\gamma)$,
\[
\lim_{r\to 1}\frac{\cK_{f\sigma_\gamma}(rz)}{\cK_{\sigma_\gamma}(rz)}=f(z),\textrm{ for }
(\sigma_\gamma)\ti{s} \text{-a.e.}\ z\in\T.
\]
\end{theo}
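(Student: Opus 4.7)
\medskip\noindent The approach hinges on two ingredients: a blow-up property of Cauchy transforms of singular measures, and a localization argument that first handles continuous functions. The plan is to exploit the elementary identity
\[
\cK_{f\sigma_\gamma}(rz) = f(z)\,\cK_{\sigma_\gamma}(rz) + \cK_{(f-f(z))\sigma_\gamma}(rz),
\]
valid for every fixed $z\in\T$, which reduces matters to showing that the error $\cK_{(f-f(z))\sigma_\gamma}(rz)/\cK_{\sigma_\gamma}(rz)$ tends to $0$ for $(\sigma_\gamma)\ti{s}$-a.e.\ $z$.

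\medskip\noindent The crucial lemma to establish first is the following \emph{blow-up} statement: for any positive finite measure $\nu$ on $\T$, $|\cK_\nu(rz)|\to\infty$ as $r\to 1$ at $\nu\ti{s}$-a.e.\ $z\in\T$. This follows from the identity $2\,\re\cK_\nu(rz)-\nu(\T)=P_\nu(rz)$ (the Poisson integral), combined with the differentiation theorem for measures on $\T$: at $\nu\ti{s}$-a.e.\ $z$ the symmetric derivative $\lim_{\varepsilon\to 0}\nu(I_\varepsilon(z))/|I_\varepsilon(z)|$ is infinite, forcing $P_\nu(rz)\to\infty$ and hence $|\cK_\nu(rz)|\to\infty$.

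\medskip\noindent With the blow-up lemma in hand, the continuous case follows by localization. For $f\in C(\T)$, fix $\eta,\delta>0$ and split the integral defining $\cK_{(f-f(z))\sigma_\gamma}(rz)$ over the arc $I_\delta(z)$ and its complement. The far piece is dominated by $2\|f\|_\infty\sigma_\gamma(\T)/\delta$, uniformly in $r$, hence negligible after dividing by the denominator (which blows up). The near piece is bounded by
\[
\int_{I_\delta(z)}\frac{|f(\xi)-f(z)|\,d\sigma_\gamma(\xi)}{|1-\bar\xi rz|}\le \eta\int_{I_\delta(z)}\frac{d\sigma_\gamma(\xi)}{|1-\bar\xi rz|},
\]
once $\delta$ is small enough that $|f-f(z)|<\eta$ on $I_\delta(z)$, and the right-hand side is comparable to $\eta\,|\cK_{\sigma_\gamma}(rz)|$. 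Letting $r\to 1$, then $\delta\to 0$, then $\eta\to 0$ settles the continuous case.

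\medskip\noindent For general $f\in L^1(\sigma_\gamma)$, approximate by continuous $f_n$ with $\|f-f_n\|_{L^1(\sigma_\gamma)}<2^{-n}$ and apply the continuous case to each $f_n$. The remaining task is to control the error $\cK_{(f-f_n)\sigma_\gamma}(rz)/\cK_{\sigma_\gamma}(rz)$ uniformly in $r$ on a set of full $(\sigma_\gamma)\ti{s}$-measure, and this is where the main obstacle lies: one needs a weak-type maximal inequality of the form
\[
\sigma_\gamma\Bigl\{z\in\T:\sup_{0<r<1}\Bigl|\frac{\cK_{g\sigma_\gamma}(rz)}{\cK_{\sigma_\gamma}(rz)}\Bigr|>\lambda\Bigr\}\le C\,\frac{\|g\|_{L^1(\sigma_\gamma)}}{\lambda}, \qquad g\in L^1(\sigma_\gamma),
\]
which substitutes for the Hardy--Littlewood bound in the absence of Lebesgue measure. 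Establishing this inequality, which requires a Besicovitch-type covering argument on $\T$ adapted to $\sigma_\gamma$ combined with the blow-up lemma, is the principal technical difficulty; once it is in place, a standard Borel--Cantelli application to the sequence $\{f_n\}$ removes the exceptional sets and yields the full conclusion.
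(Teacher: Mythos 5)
The paper itself offers no proof of Theorem \ref{t-P}; it is imported verbatim from \cite{NONTAN}. Measured against what an actual proof requires, your outline is sound in its easy parts --- the splitting $\cK_{f\sigma_\gamma}(rz)=f(z)\cK_{\sigma_\gamma}(rz)+\cK_{(f-f(z))\sigma_\gamma}(rz)$ and the blow-up lemma via $2\re\cK_{\sigma_\gamma}(rz)-\sigma_\gamma(\T)=P_{\sigma_\gamma}(rz)$ are both correct --- but it breaks down at the two places that carry all the weight. First, in the continuous case you bound the near piece by $\eta\int_{I_\delta(z)}|1-\bar\xi rz|^{-1}\,d\sigma_\gamma(\xi)$ and assert that this integral is comparable to $|\cK_{\sigma_\gamma}(rz)|$. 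It is not: the only lower bound available for the denominator is $|\cK_{\sigma_\gamma}(rz)|\ge\re\cK_{\sigma_\gamma}(rz)\asymp P_{\sigma_\gamma}(rz)$, and the pointwise domination $|1-\bar\xi rz|^{-1}\le C(1-r^2)|1-\bar\xi rz|^{-2}$ fails as soon as $|\xi-z|\gg 1-r$. In dyadic terms the near piece behaves like $\sum_{2^{-k}\ge 1-r}2^{k}\sigma_\gamma(I_{2^{-k}}(z))$ while the Poisson integral is governed by the single scale $2^{-k}\approx 1-r$, and for singular continuous $\sigma_\gamma$ the ratio need not stay bounded as $r\to1$. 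This loss of positivity of the Cauchy kernel is exactly what makes the theorem genuinely harder than its Poisson-kernel analogue $P_{f\sigma_\gamma}/P_{\sigma_\gamma}\to f$, so the base case of your induction is not established.

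Second, the passage from continuous $f$ to $f\in L^1(\sigma_\gamma)$ is made to rest on the weak-type $(1,1)$ inequality for the maximal normalized Cauchy transform $\sup_{0<r<1}|\cK_{g\sigma_\gamma}(rz)/\cK_{\sigma_\gamma}(rz)|$. That inequality is true, but it is itself a substantial theorem (Poltoratski, J.~Amer.~Math.~Soc.~\textbf{16} (2003)), proved a decade after the almost-everywhere convergence result and by methods that are not a ``Besicovitch-type covering argument combined with the blow-up lemma''; deferring it leaves the main content of the proof unproved. Note also that the original argument in \cite{NONTAN} does not pass through a maximal function at all: it exploits the Clark representation $\Phi_\gamma^*f=\cK_{f\sigma_\gamma}/\cK_{\sigma_\gamma}\in K_\theta$, treats the atoms of the measures $\sigma_\alpha$ directly, and uses the interplay between the different Clark measures (spectral averaging) to reach the singular continuous part. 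As it stands, your text is a plausible plan whose two decisive steps are, respectively, incorrect as stated and assumed without proof.
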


\medskip\noindent Theorems \ref{t-ASA} and \ref{t-P}, hold true even if the family of spectral measures possesses non-trivial absolutely continuous parts, although the normalized Cauchy transform of an arbitrary function in $L^2(\sigma_\gamma)$ cannot be interpreted as a vector from $K_\theta$.

\medskip\noindent Via the standard agreement, every function from a Hardy space $H^p$ is identified with its boundary values on the circle $\T$. It follows from Theorem \ref{t-P} that the boundary values of any $f\in K_\theta$ exist $\sigma_\gamma$-a.e for any $\gamma\in\T$ and the Clark operator
$\Phi_\gamma: K_\theta\to L^2(\sigma_\gamma)$ simply sends $f$ into its boundary values.

\medskip\noindent For a function $f\in K_\te$, denote  by $\widetilde f$ the function $\te \bar f$ on $\T$. Note that $f\in K_\theta$ and $f(0)=0$ imply that $\theta \bar f\in K_\theta$.
A function $f\in K_\te$ is called a Hermitian element, if $\widetilde f= f$. Notice that Hermitian functions satisfy $f(0)=0$.

\medskip\noindent The following simple statement plays an important role in Section \ref{s-GDHE}.

\begin{theo}[\cite{NONTAN}]\label{t-Hermite}
Let $f\in K_\te$. Then $f$ is a Hermitian element if and only if
\begin{equation}
\arg(\Phi_\gamma f)=\frac{\arg\gamma}{2} (\Mod \pi),\ \sigma_\gamma\text{-a.e.},
\label{eq01}\end{equation}
and  $\int fd\sigma_\gamma=0$ for some $\gamma\in\T$. If $f$ is a Hermitian element then $f$ satisfies \eqref{eq01} and
$\int fd\sigma_\gamma=0$ for any $\gamma\in\T$.

\end{theo}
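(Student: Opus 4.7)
The plan is to reduce the theorem to two separate observations: first, that the integral condition $\int f \, d\sigma_\gamma = 0$ is equivalent to the vanishing $f(0) = 0$, independent of $\gamma$; and second, that the angular condition on $\Phi_\gamma f$ encodes precisely the Hermiticity equation $\widetilde f = f$, via a compatibility formula between the Clark operator and the antilinear involution $f \mapsto \widetilde f$. Once both observations are in place, the biconditional falls out from the unitarity of $\Phi_\gamma$.

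For the integral condition, since $\theta(0) = 0$ the constant $\ID$ lies in $K_\te$, and the normalized Cauchy-transform formula $\Phi_\gamma^* \ID = \cK_{\sigma_\gamma}/\cK_{\sigma_\gamma} = \ID$ yields $\Phi_\gamma \ID = \ID$ for every $\gamma \in \T$. Unitarity of $\Phi_\gamma$ then gives
\[
\int_\T f \, d\sigma_\gamma = \langle \Phi_\gamma f,\ID\rangle_{L^2(\sigma_\gamma)} = \langle f,\ID\rangle_{K_\te} = \int_\T f\, dm = f(0),
\]
so the integral vanishes for some (equivalently every) $\gamma$ if and only if $f(0) = 0$. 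I would also observe here that a Hermitian $f$ automatically satisfies $f(0) = 0$: from $f = \widetilde f = \te \bar f$ on $\T$ one has $f(0) = \int f\, dm = \overline{\int \bar\te f \, dm} = 0$, since $f \in K_\te$ forces $\bar\te f \in \overline{H^2_0}$. Thus $\widetilde f$ is a bona fide element of $K_\te$ under either hypothesis of the iff.

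The heart of the argument is the identity
\[
\Phi_\gamma \widetilde f = \gamma\, \overline{\Phi_\gamma f} \quad \sigma_\gamma\text{-a.e.}, \qquad f \in K_\te,\ f(0) = 0.
\]
Morally it says that since $\te = \gamma$ on the support of $\sigma_\gamma$, the boundary trace of $\widetilde f = \te \bar f$ should equal $\gamma \bar f = \gamma \overline{\Phi_\gamma f}$ via Theorem~\ref{t-P}; the subtle point is that $\widetilde f$ and the pointwise product $\te \bar f$ coincide only Lebesgue-a.e. on $\T$, whereas $\sigma_\gamma$ is supported on a Lebesgue-null set (for inner $\te$). To bypass this, I would note that both sides are antilinear, $L^2$-bounded maps from $\{f \in K_\te : f(0) = 0\}$ to $L^2(\sigma_\gamma)$, and verify the identity on the dense family of projected reproducing kernels $\{k_\lambda - \ID : \lambda \in \D\}$, where $k_\lambda(z) = (1 - \overline{\te(\lambda)}\te(z))/(1 - \bar\lambda z)$. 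A short computation using $|\te|=1$ on $\T$ yields
\[
\widetilde{k_\lambda - \ID}(z) = \frac{\lambda\,\te(z) - z\,\te(\lambda)}{z - \lambda},
\]
and at a point $\xi$ of the $\sigma_\gamma$-support (where $\te(\xi)=\gamma$) both sides of the identity collapse to $(\lambda\gamma - \xi\,\te(\lambda))/(\xi - \lambda)$. Density of $\{k_\lambda - \ID\}$ in $\{f \in K_\te : f(0)=0\}$ is immediate from the orthogonal decomposition $K_\te = \C\cdot \ID \oplus \{f(0)=0\}$, since each $k_\lambda$ has projection $\ID$ onto $\C\cdot \ID$.

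Granted the identity, the biconditional is routine. The angular condition \eqref{eq01} is precisely $\gamma \overline{\Phi_\gamma f} = \Phi_\gamma f$ $\sigma_\gamma$-a.e.\ (pointwise, $\arg\Phi_\gamma f = \arg\gamma/2 \pmod{\pi}$ reads $(\Phi_\gamma f)^2 = \gamma|\Phi_\gamma f|^2$); by the key identity this is $\Phi_\gamma \widetilde f = \Phi_\gamma f$, and unitarity of $\Phi_\gamma$ forces $\widetilde f = f$ in $K_\te$, i.e., $f$ is Hermitian. The forward direction runs identically: Hermiticity gives $f(0)=0$ and $\Phi_\gamma \widetilde f = \Phi_\gamma f$ for every $\gamma$, whence \eqref{eq01} and $\int f \, d\sigma_\gamma = 0$ hold for every $\gamma \in \T$. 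The main obstacle throughout is the key identity; once that bridge between the involution on $K_\te$ and its image under the Clark operator is in place, everything reduces to the unitarity of $\Phi_\gamma$ together with the reformulation of the integral condition.
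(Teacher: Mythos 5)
The paper itself offers no proof of Theorem~\ref{t-Hermite} --- it is quoted from \cite{NONTAN} --- so there is no internal argument to compare against; I can only assess your proof on its own terms, and it is correct and essentially complete. The reduction of the integral condition to $f(0)=0$ via $\Phi_\gamma\ID=\ID$ and unitarity is right; the computation $\widetilde{k_\lambda-\ID}(z)=(\lambda\te(z)-z\te(\lambda))/(z-\lambda)$ checks out; the density of $\{k_\lambda-\ID\}$ in $\{f\in K_\te:\,f(0)=0\}$ is as you say; and the final step (the condition $\arg w\equiv\tfrac{\arg\gamma}{2}\ (\Mod\pi)$ being the same as $w=\gamma\bar w$, then injectivity of $\Phi_\gamma$ giving $f=\widetilde f$) is routine. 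Your identification of the real difficulty --- that $\widetilde f=\te\bar f$ is only a Lebesgue-a.e.\ identity on $\T$ while $\sigma_\gamma$ is carried by a Lebesgue-null set, so one cannot simply substitute $\te=\gamma$ --- is exactly the point, and the reproducing-kernel workaround handles it cleanly, since for the kernels both sides have explicit analytic formulas whose nontangential limits can be evaluated directly. Two standard inputs are used without justification and deserve an explicit flag: that $\te$ has nontangential limit $\gamma$ at $\sigma_\gamma$-a.e.\ point (classical Clark theory, see e.g.~\cite{cimaross}; the paper never states it), and that $\Phi_\gamma$ acts by taking nontangential boundary values, which the paper does record as a consequence of Theorem~\ref{t-P}. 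With those granted, your argument is a valid self-contained proof.
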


\subsection{Spaces of Paley--Wiener functions}\label{ss-PAL}
For $a>0$ the class of Paley--Wiener functions on $\R$ is given by
\[
 \PW_a = \{\hat f:f\in L^2(-a,a)\},
\]
where $\hat f(z) = \int e^{-izt} f(t) dt$ denotes the classical Fourier transform of $f$. Alternatively, the Paley--Wiener space can be characterized as the space of entire functions of exponential type at most $a$ whose boundary values on the real line  are square summable with respect to Lebesgue measure.

\medskip\noindent  The Paley--Wiener space $\PW_a$ is closely related to the model space $K_\te$ for the inner function
 $$\theta_a(z) = \theta(z) = e^{-2a\frac{1+z}{1-z}}$$
  in the unit disk. To establish the connection, consider the conformal map $$\psi(z)=\frac{z-i}{z+i}$$ from $\C_+$ to $\D$.
  Denote $\vartheta(z)=\vartheta_a(z) = e^{2iaz}$. Note that
  $$\vartheta_a(z)=\theta_a(\psi(z)).$$
  By $K_\vartheta^\R$ denote the space obtained from $K_\te$
  by composing all functions from $K_\te$ with $\psi$, i.e.
  $$K_\vartheta^\R=\{f(\psi)|f\in K_\te\}.$$
Then the space
$$e^{-iaz}K_\vartheta^\R=\{e^{-iaz}f(\psi)|f\in K_\vartheta\}$$
 is equal to the space of entire functions of exponential type at most $a$ and with boundary values on $\R$ that are square summable with respect to the measure $(1+x^2)^{-1} dx$.

\medskip\noindent Hence we have
\[
 \PW_a \subset e^{-iaz}K_{\vartheta_a}^\R
\qquad \text{for }0<a.
\]
Further, one can prove that the codimension is equal to 1 and $$e^{-iaz}K_{\vartheta_a}^\R\ominus \PW_a$$ consists of constant functions.

\section{Arbitrary non-zero vectors yield cyclic vectors for almost all parameters}\label{s-CycAA}
\medskip\noindent Let $A$ be a self-adjoint (possibly unbounded) operator on a separable Hilbert space $\cH$ and let $\f$ be a cyclic vector for $A$.
Define the family of self-adjoint rank-one perturbations of $A$, $A_\alpha$ as in Subsection \ref{ss-SaRk1}.
Recall that $(A_\alpha)\ti{ac}$ and $(A_\alpha)\ti{s}$ denote the absolutely continuous part and the singular part of the operator $A_\alpha$, respectively.

\medskip\noindent Further notice that $A_\alpha=(A_\alpha)\ti{s}\oplus (A_\alpha)\ti{ac}$, since $(\mu_\alpha)\ti{s}\perp (\mu_\alpha)\ti{ac}$.

\begin{theo}\label{t-GENCyc}
Let $A_\alpha$ be a family of self-adjoint rank-one perturbations in a Hilbert space $\cH$ given by \eqref{d-Rk1}. Let $0\neq f\in \cH$. Then
\begin{itemize}
\item[1)] The function $f$ is a cyclic vector for $( A_\alpha)\ti{ac}$ for all but a countable number of $\alpha\in\R$.
\item[2)] The function $f$ is a cyclic vector for $( A_\alpha)\ti{s}$ for Lebesgue a.e.~$\alpha\in\R$.
\end{itemize}
\end{theo}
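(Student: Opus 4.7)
The plan is to use the spectral theorem to identify $\cH$ with $L^2(\mu)$, $A$ with multiplication by $t$, and $\f$ with $\ID$, so that the spectral representation $V_\alpha\colon L^2(\mu)\to L^2(\mu_\alpha)$ intertwines $A_\alpha$ with multiplication by $t$ on $L^2(\mu_\alpha)$ and sends $\f$ to $\ID$. A vector $f\in\cH$ is cyclic for $(A_\alpha)\ti{ac}$ (respectively, for $(A_\alpha)\ti{s}$) exactly when $V_\alpha f$ is nonzero $(\mu_\alpha)\ti{ac}$-almost everywhere (respectively $(\mu_\alpha)\ti{s}$-almost everywhere). The technical backbone is the identity obtained by pairing the standard rank-one resolvent formula
\[
(A_\alpha-z)^{-1}f=(A-z)^{-1}f-\alpha\,\bigl\langle(A_\alpha-z)^{-1}f,\f\bigr\rangle\,(A-z)^{-1}\f
\]
with $\f$ and solving the resulting scalar equation: for every $z\in\C\setminus\R$,
\[
\int\frac{V_\alpha f(t)}{t-z}\,d\mu_\alpha(t)\;=\;\frac{K_f(z)}{1+\alpha F(z)},
\]
where $F(z)=\int d\mu(t)/(t-z)$ is the Borel transform of $\mu$ and $K_f(z)=\int f(t)/(t-z)\,d\mu(t)$.

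For part (2), I would divide this identity by $F(z)/(1+\alpha F(z))=\int d\mu_\alpha(t)/(t-z)$ and apply the real-line analogue of Theorem~\ref{t-P} (to the measure $\mu_\alpha$ and the function $V_\alpha f$) to extract, at $(\mu_\alpha)\ti{s}$-almost every $t$,
\[
V_\alpha f(t)\;=\;g(t)\;:=\;\lim_{\eps\to 0+}\frac{K_f(t+i\eps)}{F(t+i\eps)}.
\]
The crucial point is that $g$ is \emph{independent of $\alpha$}. Since $f\neq 0$ in $L^2(\mu)$ forces $K_f\not\equiv 0$ in $\C_+$, the meromorphic quotient $K_f/F$ is a nontrivial function of bounded characteristic, so by Privalov's uniqueness theorem the set $E:=\{t\in\R:g(t)=0\}$ has Lebesgue measure zero. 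The self-adjoint Aleksandrov spectral averaging formula $\int_\R\mu_\alpha(E)\,d\alpha=\pi|E|=0$ then forces $\mu_\alpha(E)=0$, and hence $(\mu_\alpha)\ti{s}(\{V_\alpha f=0\})=0$, for Lebesgue-almost every $\alpha$, which is (2).

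For part (1), I would apply Stieltjes--Plemelj inversion to the central identity on the absolutely continuous support $\Omega:=\{t\in\R:\im F(t+i0)>0\}$, which is $\alpha$-invariant up to null sets. Reducing to real-valued $f$ (the general case splits into real and imaginary parts), a short computation yields
\[
V_\alpha f(t)\;=\;a(t)+\alpha\, b(t)\qquad\text{on }\Omega,
\]
with $a(t)=\im K_f(t+i0)/\im F(t+i0)$ and $b(t)=\im\bigl[K_f(t+i0)\overline{F(t+i0)}\bigr]/\im F(t+i0)$. Using $\im F>0$ on $\Omega$, a direct check identifies the joint zero set: $\{a=b=0\}\cap\Omega=\Omega\cap\{K_f(t+i0)=0\}$, which is Lebesgue-null by Privalov applied to the nonzero Nevanlinna function $K_f$. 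Off this null set, $V_\alpha f(t)=0$ if and only if $\alpha=-a(t)/b(t)=:h(t)$, so $\{V_\alpha f=0\}\cap\Omega$ coincides, modulo a null set, with the level set $h^{-1}(\alpha)$. Exhausting $\Omega$ by countably many pieces of finite Lebesgue measure and invoking the elementary fact that a measurable real function on a finite-measure set has at most countably many level sets of positive measure, I conclude that the set of $\alpha$ for which $(\mu_\alpha)\ti{ac}(\{V_\alpha f=0\})>0$ is countable, proving (1).

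The main obstacle I anticipate is the rigorous interpretation of $V_\alpha f$ as the pointwise boundary value of a Cauchy-type meromorphic object on $\R$, especially at the $(\mu_\alpha)\ti{s}$-carrier where the denominator $1+\alpha F(t+i0)$ vanishes; this is where one needs a careful application of the real-line Poltoratski theorem and the boundary-value theory of Cauchy transforms of finite measures. Once this analytic step is in place, the rest of the argument is a clean interplay of Aleksandrov's spectral averaging, Privalov's uniqueness theorem, and an elementary level-set counting argument.
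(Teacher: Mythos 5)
Your proposal is correct, and it rests on the same backbone as the paper's proof: the Aronszajn--Krein-type identity (Lemma \ref{l-resolvent1}, your ``central identity'' with $\beta=0$), Fatou/Privalov boundary-value theory, Theorem \ref{t-P}, and spectral averaging. For part 2) your argument is essentially the paper's, run forward rather than by contradiction: both exploit that $\rK_{f_\alpha\mu_\alpha}/\rK_{\mu_\alpha}=\rK_{f_0\mu_0}/\rK_{\mu_0}$ is independent of $\alpha$, that this quotient recovers $f_\alpha$ at $(\mu_\alpha)\ti{s}$-a.e.\ point, and that its boundary zero set $E$ is Lebesgue-null by Luzin--Privalov; the paper leaves the spectral-averaging step implicit in the claim that the union of the sets $\Omega_\alpha$ has positive Lebesgue measure, which you make explicit via $\int\mu_\alpha(E)\,d\alpha=|E|$. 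For part 1) you take a genuinely different, more computational route: you invert the Cauchy transform on the common absolutely continuous support to obtain the explicit affine dependence $f_\alpha=a+\alpha b$ (after the harmless reduction to real-valued $f$, using that $V_\alpha$ commutes with complex conjugation) and then count the positive-measure level sets of $-a/b$; the paper instead argues by contradiction, using the jump formula \eqref{e-JUMP} to show that two zero sets $\Sigma_\alpha$, $\Sigma_\beta$ with $\alpha\neq\beta$ cannot intersect in positive Lebesgue measure, since the right-hand side of the first identity in \eqref{e-AK} would then jump while the left-hand side does not. The two arguments reduce to the same pigeonhole fact --- one cannot have uncountably many pairwise essentially disjoint sets of positive measure inside a $\sigma$-finite set --- but yours buys an explicit description of the exceptional countable set as $\{\alpha:|\{-a/b=\alpha\}|>0\}$, and it sidesteps a small point the paper's jump comparison needs (that $\rK_{f_\beta\mu_\beta}$ has a.e.\ non-vanishing boundary values, again by Privalov), while the paper's version avoids your pointwise density computation. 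I see no genuine gap in your argument; the analytic subtleties you flag (boundary values of Cauchy transforms of Poisson-finite measures in the unbounded case) are real but standard and affect the paper's proof equally.
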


\medskip\noindent In the following subsection we prove an equivalent reformulation of this theorem in terms of its  spectral representation.
\subsection{Proof of Theorem \ref{t-GENCyc}}
As we mentioned before, in view of the spectral theorem, instead of dealing with a general family of self-adjoint rank-one perturbations given by equation \eqref{d-Rk1} one can consider the self-adjoint rank-one perturbations $A_\alpha = M_t + \alpha (\fdot, \ID)\ci{L^2(\mu)}\ID$ on $L^2(\mu)$.  Let the spectral operator $V_\alpha: L^2(\mu)\to L^2(\mu_\alpha)$ be as defined in Subsection \ref{ss-SaRk1}. In these settings Theorem \ref{t-GENCyc} can be stated as follows.

\begin{theo}\label{t-RK1}
Let $0\neq f\in L^2(\mu)$. Then
\begin{itemize}
\item[1)] The function $f_\alpha=V_\alpha f\in L^2(\mu_\alpha)$ is not equal to zero $(\mu_\alpha)\ti{ac}$-a.e. for all but a countable number of $\alpha\in\R$.
 \item[2)] The function $f_\alpha=V_\alpha f\in L^2(\mu_\alpha)$ is not equal to zero $(\mu_\alpha)\ti{s}$-a.e.  for Lebesgue a.e.~$\alpha\in\R$.
\end{itemize}
\end{theo}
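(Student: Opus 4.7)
The plan is to derive an explicit Cauchy-transform representation of $V_\alpha f$ and then analyze its zero set on the singular and absolutely continuous parts of $\mu_\alpha$ separately. Write $F(z) := \int(t-z)^{-1}d\mu(t)$ and $G(z) := \int(t-z)^{-1}f(t)\,d\mu(t)$, both of bounded type on $\C\setminus\R$. Since $0\neq f\in L^2(\mu)$, the signed measure $f\,d\mu$ is nonzero, so $G\not\equiv 0$, and the meromorphic quotient $g := G/F$ is a nonzero bounded-type function in $\C_+$. The intertwining $V_\alpha A_\alpha = M_t V_\alpha$, together with $V_\alpha\ID=\ID$ and the resolvent identity $(A_\alpha-z)^{-1}\ID = (t-z)^{-1}/(1+\alpha F(z))$, yields on the dense family of resolvent vectors the formula $V_\alpha[(t-z_0)^{-1}](x) = (1+\alpha F(z_0))/(x-z_0)$. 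Expanding $(V_\alpha f,(x-z)^{-1})_{L^2(\mu_\alpha)}$ by duality gives the key identity
\[
 \int \frac{V_\alpha f(x)}{x-z}\, d\mu_\alpha(x) \;=\; \frac{G(z)}{1+\alpha F(z)}, \qquad z\in\C\setminus\R.
\]
Combined with the Aronszajn--Krein formula $\int(t-z)^{-1}d\mu_\alpha(t) = F(z)/(1+\alpha F(z))$, this says the normalized Cauchy transform of $(V_\alpha f)\,d\mu_\alpha$ equals $g = G/F$, independently of $\alpha$.

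For Part 2, the half-plane version of Theorem \ref{t-P} gives $V_\alpha f(x) = g_+(x)$ for $(\mu_\alpha)_s$-a.e. $x$. Let $S := \{x\in\R : g_+(x) = 0\}$. Since $g$ is meromorphic of bounded type in $\C_+$ with $g\not\equiv 0$, Luzin--Privalov forces $|S|=0$. The self-adjoint analog of Aleksandrov's spectral averaging (Theorem \ref{t-ASA}) says that the measure $\int_\R \mu_\alpha(\cdot)\,d\alpha$ is absolutely continuous with respect to Lebesgue measure; hence $\mu_\alpha(S)=0$ for Lebesgue-a.e. $\alpha\in\R$, and \emph{a fortiori} $(\mu_\alpha)_s(S)=0$. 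This completes Part 2.

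For Part 1, take the boundary jump of the key identity across the common absolutely continuous support $E := \{x : \mathrm{Im}\,F_+(x) > 0\}$. Using Privalov's formulas $F_+-F_-=2\pi iw$ and $G_+-G_-=2\pi ifw$ with $w := d\mu_{ac}/dx > 0$ on $E$, a short algebraic simplification collapses to
\[
 V_\alpha f(x) \;=\; f(x)(1+\alpha F_+(x)) - \alpha G_+(x), \qquad \text{Lebesgue-a.e. } x\in E.
\]
Hence $V_\alpha f(x)=0$ on $E$ iff $f(x) = \alpha(G_+(x) - f(x)F_+(x))$. On the set $T := \{x\in E : G_+(x) = f(x)F_+(x)\}$, vanishing of $V_\alpha f$ on a positive-Lebesgue-measure subset would force $f=0$ there, hence $G_+=0$ on a set of positive Lebesgue measure, hence $G\equiv 0$ by Luzin--Privalov --- contradicting $f\neq 0$. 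Off $T$, the condition reads $\alpha = \Psi(x) := f(x)/(G_+(x)-f(x)F_+(x))$. Since the measurable function $\Psi$ on the $\sigma$-finite set $E\setminus T$ has at most countably many values $c$ for which $\{x:\Psi(x)=c\}$ has positive Lebesgue measure, Part 1 follows.

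The hard part is Step 1 --- establishing the clean representation $\int V_\alpha f(x)/(x-z)\,d\mu_\alpha(x) = G(z)/(1+\alpha F(z))$ rigorously, particularly in the unbounded form-bounded setting $\f\in\mathcal{H}_{-1}(A)$, where one must be careful with domains and the density/duality extension from rank-one resolvent vectors to all of $L^2(\mu)$. The rest is fairly routine boundary-value analysis: Luzin--Privalov for bounded-type functions, the half-plane analog of Theorem \ref{t-P} to read off $V_\alpha f$ from the normalized Cauchy transform on the singular spectrum, one application of spectral averaging, and a pigeonhole on level sets of $\Psi$ for the absolutely continuous case.
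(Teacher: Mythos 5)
Your proposal is correct, and its skeleton is the same as the paper's: your ``key identity'' is exactly Lemma \ref{l-resolvent1} (obtained there from the second resolvent identity, with the case $\f\in\cH_{-1}(A)\setminus\cH$ handled by Kurasov's resolvent formula), and your Part 2 is the paper's argument run forward rather than by contradiction, using the same three ingredients: the $\alpha$-independence of the normalized Cauchy transform, Theorem \ref{t-P} on the singular part, and Luzin--Privalov combined with spectral averaging as in Theorem \ref{t-ASA}. The one place you genuinely diverge is Part 1. The paper argues by contradiction: if uncountably many of the zero sets $\Sigma_\alpha$ had positive Lebesgue measure, two of them would intersect in a set of positive measure, and on that intersection the two sides of the first formula in Lemma \ref{l-resolvent1} would have incompatible Privalov jumps. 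You instead compute the jump explicitly to obtain the pointwise identity $f_\alpha=f(1+\alpha F_+)-\alpha G_+$ a.e.\ on $\{w>0\}$ (which is correct, using $w_\alpha=w/|1+\alpha F_+|^2$ and $F_-=\overline{F_+}$ a.e.) and then pigeonhole the level sets of $\Psi$. Your version buys an explicit description of the exceptional set of parameters as the set of values attained by $\Psi$ on a set of positive measure; the paper's is shorter and avoids the case analysis on $T$. One point to tighten (which the paper's own write-up also elides): knowing $G\not\equiv0$ on $\C\setminus\R$ does not by itself give $G|_{\C_+}\not\equiv0$, since a nonzero measure can have identically vanishing Cauchy transform in one half-plane. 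If $G|_{\C_+}\equiv0$, the F.\ and M.\ Riesz theorem forces $f\,d\mu$ to be absolutely continuous with a.e.\ nonvanishing density, hence $w>0$ Lebesgue-a.e., hence $(\mu_\alpha)\ti{s}=0$ for a.e.\ $\alpha$ and Part 2 holds vacuously; so your argument survives, but this degenerate case should be acknowledged.
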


\begin{rems*}
(a) Obviously one cannot expect to obtain the above conclusion of cyclicity for all $\alpha\in \R$. It is always possible to start with  $f$ that is zero on a set of positive $\mu_0$-measure that is, therefore, not cyclic for $\alpha=0$. In fact, under the conditions of Theorem \ref{t-RK1}, we cannot replace
"countable" with "finite."\\
(b) In the case of purely singular spectral measures for some natural classes of $f$ the conclusion can be strengthened  to "all but one" $\alpha$, see Section \ref{s-GDHE} below.
\end{rems*}

\medskip\noindent Our next example  says that, in general, if $f$ is not Hermitian, then $f$ can be non-cyclic for uncountably many corresponding rank-one perturbations. In particular, in the conclusion of Theorem \ref{t-RK1} the distinction between the singular and the absolutely continuous is necessary.
As was mentioned above, throughout the rest of the paper we will switch between self-adjoint and unitary settings
as a matter of convenience. An analogous discussion can always be carried out in the other case.

\begin{ex}\label{exa-two}
Consider the setting of rank-one unitary perturbation described in Subsection \ref{ss-ACT}. We will construct a bounded holomorphic function $\theta$ such that for
the family of spectral measures $(\sigma_\gamma)_{\gamma \in \T}$ defined by  \eqref{thetamu} and the corresponding family of unitary operators $U_\gamma$ there exists a non-zero function $f\in L^2(\sigma_1)$ such that $f_\gamma= \Phi_\gamma \Phi_{1}^{\ast} f$ is non-cyclic in $ L^2(\sigma_\gamma)$ for an uncountable set of $\gamma$'s. Note that in this example $\theta$ is not inner and the corresponding operators have nontrivial absolutely continuous parts.

\medskip\noindent Let $C$ be a Cantor (closed uncountable) subset of the unit circle $\T$. Let $w $ be the continuous function on $\T$ defined by $w(\xi) = dist^2(\xi, C)$.  Denote by $(\sigma_\gamma)_{\gamma \in \T}$ the system of probability measures which is the family of Clark measures for some inner function $\theta$, and such that the measure $\sigma_1$ coincides, up to a multiplicative constant, with the measure $ w\,dm$, where $m$ is the Lebesgue measure on $\T$. Note that, by definition of $\sigma_1$, we have
$$\int\frac{1}{|x-y|^2}\,d\sigma_1(y)<\infty$$
for any $x\in C$. It is well known (see, e.g.~\cite{cimaross}) that
the last condition implies that each point of $C$ is a point mass for one of the measures $\sigma_\gamma$. Since $C$ is uncountable,
we conclude that
uncountably many $\sigma_\gamma$'s have atoms on the set $C$.

\medskip\noindent Let now $F$ be an outer function with modulus equal to $w$ almost everywhere on $\T$. Consider $f=F/w$. Then $f$ is a unimodular function on $\T$, and we have $ f\sigma_1 = fw=F$, hence $\cK _{ f\sigma_1} = 0$ on the set $C$. So we have
$$\frac{\cK _{ f\sigma_1}}{\cK _{ \sigma_1}} = 0\qquad\text{on } C.$$ In virtue of the following Lemma \ref{l-resolvent1}, it follows that $ \Phi_\gamma \Phi_{1}^{\ast} f=0$ on the (uncountable) set of those $\gamma \in \T$ for which $\sigma_\gamma$ has a point mass on $C$.
Hence $f$ is not cyclic for uncountably many operators $U_\gamma$.
\end{ex}

\bigskip

\begin{lem}[Aronszajn--Krein-type formula]\label{l-resolvent1}
Under the hypotheses of Theorem \ref{t-RK1} we have
\begin{align}\label{e-AK}
\rK_{f_\alpha\mu_\alpha}=\frac{\rK_{f_{\beta}\mu_{\beta}}}{1+(\alpha-\beta) \rK_{\mu_{\beta}}}
\qquad\text{and}\qquad
\frac{\rK_{f_\alpha\mu_\alpha}}{\rK_{\mu_\alpha}}=\frac{\rK_{f_0\mu_0}}{\rK_{\mu_0}}\,.
\end{align}
\end{lem}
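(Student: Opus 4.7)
The plan is to reduce both identities to scalar manipulations via the intertwining operator $V_\alpha$ and the second resolvent identity applied to the rank-one difference $A_\alpha-A_\beta$. Since $V_\alpha$ is unitary with $V_\alpha A_\alpha=M_t V_\alpha$ and $V_\alpha\ID=\ID$, one can rewrite every relevant quantity as a matrix element on the fixed space $L^2(\mu)$:
\[
\rK_{f_\alpha\mu_\alpha}(z)=\int\frac{f_\alpha(t)\,d\mu_\alpha(t)}{t-z}
=((M_t-z\OID)^{-1}f_\alpha,\ID)_{L^2(\mu_\alpha)}
=((A_\alpha-z\OID)^{-1}f,\ID)_{L^2(\mu)},
\]
and in particular $\rK_{\mu_\alpha}(z)=((A_\alpha-z\OID)^{-1}\ID,\ID)_{L^2(\mu)}$. (The normalization factor $1/\pi$ in the definition of $\rK$ is immaterial, as both identities are homogeneous in it.)

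Next, I would apply the resolvent identity
\[
(A_\alpha-z\OID)^{-1}-(A_\beta-z\OID)^{-1}
=-(\alpha-\beta)\,(A_\beta-z\OID)^{-1}P\,(A_\alpha-z\OID)^{-1},\qquad Pu=(u,\ID)\ID,
\]
which comes from $A_\alpha-A_\beta=(\alpha-\beta)P$. Evaluating both sides on $f$ collapses $P(A_\alpha-z\OID)^{-1}f$ to the scalar $\rK_{f_\alpha\mu_\alpha}(z)$ times $\ID$; pairing with $\ID$ then gives the scalar relation
\[
\rK_{f_\alpha\mu_\alpha}(z)-\rK_{f_\beta\mu_\beta}(z)=-(\alpha-\beta)\,\rK_{f_\alpha\mu_\alpha}(z)\,\rK_{\mu_\beta}(z),
\]
and solving for $\rK_{f_\alpha\mu_\alpha}$ yields the first identity of the lemma.

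The second identity is then a formal consequence. Specializing the first identity to $f\equiv\ID$ (so that $f_\alpha\equiv\ID$ for every $\alpha$) and to $\beta=0$ recovers the classical Aronszajn--Krein formula $\rK_{\mu_\alpha}=\rK_{\mu_0}/(1+\alpha\rK_{\mu_0})$; the general-$f$ version with $\beta=0$ reads $\rK_{f_\alpha\mu_\alpha}=\rK_{f_0\mu_0}/(1+\alpha\rK_{\mu_0})$. Dividing cancels the common denominator and produces the second identity.

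No substantial obstacle should arise, since the whole argument is linear algebra in a two-dimensional subspace plus the transfer step through $V_\alpha$. The only point requiring a bit of care is choosing the right ordering in the resolvent identity, so that the $f$-dependence is absorbed into the factor $(A_\alpha-z\OID)^{-1}$ (producing $\rK_{f_\alpha\mu_\alpha}$) while the $\ID$-dependence is absorbed into the factor $(A_\beta-z\OID)^{-1}$ (producing $\rK_{\mu_\beta}$); the opposite choice yields an equivalent but less convenient relation.
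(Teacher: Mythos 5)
Your argument is correct and follows essentially the same route as the paper: the second resolvent identity for the rank-one difference $A_\alpha-A_\beta=(\alpha-\beta)(\fdot,\ID)\ID$, applied to $f$ and paired with $\ID$, gives the scalar relation $\rK_{f_\alpha\mu_\alpha}-\rK_{f_\beta\mu_\beta}=-(\alpha-\beta)\rK_{f_\alpha\mu_\alpha}\rK_{\mu_\beta}$, after which the second identity follows by specializing to $f=\ID$ (Aronszajn--Krein) and dividing. The only point the paper treats that you omit is the singular form-bounded case $\ID\in\cH_{-1}(A)\setminus\cH$, where the operator $Pu=(u,\ID)\ID$ is not bounded on $\cH$ and the resolvent formula must be replaced by its regularized version (cited from Kurasov); this is a routine but necessary extra step under the stated hypotheses, since the paper allows unbounded $A$ with only $\int d\mu(t)/(1+|t|)<\infty$.
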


\begin{proof}[Proof of Lemma \ref{l-resolvent1}]
First consider the case where $\ID\in\cH=L^2(\mu)$. Let $z\in \C\backslash \R$. Combining the second resolvent equation and the fact that $A_\alpha = A_\beta +(\alpha - \beta) (\fdot, \ID)\ID$ we obtain
\[
(A_{\beta}-z\OID)^{-1}\ID - (A_\alpha-z\OID)^{-1}\ID = (\alpha-\beta) ((A_\alpha-z\OID)^{-1} \fdot, \ID) (A_{\beta}-z\OID)^{-1}\ID.
\]
Application to a vector $f\in\cH$ and pairing with $\ID$ yields
\[
((A_{\beta}-z\OID)^{-1}f,\ID) - ((A_\alpha-z\OID)^{-1}f,\ID) = (\alpha-\beta) ((A_\alpha-z\OID)^{-1} f, \ID) ((A_{\beta}-z\OID)^{-1}\ID,\ID).
\]

\medskip\noindent Recall that $V_\alpha A_\alpha= M_t V_\alpha$, $V_\alpha \ID=\ID$ and $V_\alpha f=f_\alpha$. With this we obtain
\[
\rK_{f_{\beta}\mu_{\beta}}-\rK_{f_\alpha\mu_\alpha}=(\alpha-\beta) \rK_{f_\alpha\mu_\alpha} \rK_{\mu_{\beta}},
\]
or equivalently,
\begin{align}\label{e-AK0}
\rK_{f_\alpha\mu_\alpha}=\frac{\rK_{f_{\beta}\mu_{\beta}}}{1+(\alpha-\beta) \rK_{\mu_{\beta}}}= \frac{\rK_{f_{\beta}\mu_{\beta}}}{\rK_{\mu_{\beta}}}\,\frac{\rK_{\mu_{\beta}}}{1+(\alpha-\beta) \rK_{\mu_{\beta}}}=\frac{\rK_{f_{\beta}\mu_{\beta}}}{\rK_{\mu_{\beta}}}\,\rK_{\mu_\alpha}.
\end{align}
In the last equality we used the well-known Aronszajn--Krein formula, which can be obtained from the first equality of \eqref{e-AK0} by using $f=\f$ (or equivalently $f_\alpha=\ID$).

\medskip\noindent To obtain the second formula of \eqref{e-AK}, we divide both sides by $\rK_{\mu_\alpha}$.

\medskip\noindent If $\ID\in\cH_{-1}(A)\backslash \cH$ the resolvent formula is slightly more complicated
\begin{align*}
(A_\alpha-\lambda\OID)^{-1}f
&=(A_\beta-\lambda\OID)^{-1}f-\frac{(\alpha-\beta)\left((A_{\beta}-\lambda\OID)^{-1}f,\ID\right)}{1+(\alpha-\beta)\left((A_{\beta}-\lambda\OID)^{-1}\ID,\ID\right)}(A_{\beta}-\lambda\OID)^{-1}\ID
\end{align*}
for $f\in\cH_{-1}(A)$, see e.g.~\cite{KL}. When paired with the vector $\ID$ this yields
\begin{align*}
\rK_{f_\alpha\mu_\alpha}=\left(1-\frac{(\alpha-\beta) \rK_{\mu_{\beta}}}{1+(\alpha-\beta) \rK_{\mu_{\beta}}}\right)\rK_{f_{\beta}\mu_{\beta}}
= \frac{\rK_{f_{\beta}\mu_{\beta}}}{1+(\alpha-\beta) \rK_{\mu_{\beta}}}.
\end{align*}

\medskip\noindent The remainder of the proof for $\ID\in\cH_{-1}(A)\backslash \cH$ now follows similarly to the case of regular perturbations $\ID\in\cH$.
\end{proof}

\bigskip
\begin{proof}[Proof of part 1) of Theorem \ref{t-RK1}]
Define the set $$\sa=\{x\in\supp(\mu_\alpha)\ti{ac}:f_\alpha(x)=0\}.$$ The goal is to show that $(\mu_\alpha)\ti{ac}(\sa)=0$ for all but a countable number of parameters $\alpha$.

\medskip\noindent Assume that $f_\alpha$ is not cyclic for uncountably many $\alpha\in\R$, i.e.~assume that for some $S\subset\R$, $S$ uncountable, we have $(\mu_\alpha)\ti{ac}(\sa)>0$ for all $\alpha \in S$. Then $|\sa|>0$ for all $\alpha\in S$. Since $S$ is uncountable
\begin{align}\label{e-sa}
|\sa\cap\Sigma_\beta|>0
\qquad\text{for some }\alpha,\beta\in S\text{ with }\alpha\neq\beta.
\end{align}

\medskip\noindent  Let us fix $\alpha$ and $\beta$ satisfying \eqref{e-sa} and investigate the jump behavior in the first equation of \eqref{e-AK} below. By Fatou's jump theorem, see equation \eqref{e-JUMP}, we have
\[
(\rK_{f_\alpha\mu_\alpha})\ci-(x)-(\rK_{f_\alpha\mu_\alpha})\ci+(x)=2\pi i \,\frac{d(f_\alpha\mu_\alpha)}{dx}(x)
=0
\qquad\text{Lebesgue-a.e.~}x\in(\sa\cap\Sigma_\beta),
\]
because $f_\alpha=0$ on $\sa$.

\medskip\noindent Similarly $\rK_{f_\beta\mu_\beta}$ has no jump Lebesgue almost everywhere on $\sa\cap\Sigma_\beta$. On the other hand $\rK_{\mu_\beta}$  has a non-zero jump Lebesgue almost everywhere on $(\sa\cap\Sigma_\beta)\subset\supp(\mu_\alpha)\ti{ac}$.

\medskip\noindent Hence, while the right hand side in the first equation of \eqref{e-AK} has a jump, the left hand side does not jump Lebesgue-a.e.~on $\sa\cap\Sigma_\beta$ where we have \eqref{e-sa}, and we arrive at a contradiction.
Therefore the assumption that $f_\alpha$ is not cyclic for uncountably many $\alpha\in\R$ cannot be maintained.
\end{proof}

\begin{proof}[Proof of part 2) of Theorem \ref{t-RK1}]
Denote by $S_\alpha$ the essential support of the measure  $(\mu_\alpha)\ti{s}$ defined as the
set of points where the Radon derivative of  $(\mu_\alpha)\ti{s}$ is infinite.
It follows from the the Aronszajn--Donoghue theorem, Theorem \ref{t-AD},
that the sets $S_\alpha$ are disjoint.
Define the set $$\oa=\{x\in S_\alpha :f_\alpha(x)=0\}.$$ The goal is to show that $(\mu_\alpha)\ti{s}(\oa)=0$ for Lebesgue a.e.~$\alpha\in \T$.

\medskip\noindent Assume that $f_\alpha$ is not cyclic for a set of $\alpha$'s with positive Lebesgue measure, i.e.~assume that for some $S\subset\R$, $|S|>0$ we have
\[
\mu_\alpha(\oa)>0\qquad\text{for all }\alpha \in S.
\]

\medskip\noindent In virtue of Theorem \ref{t-P}, (or, more precisely, its analog for the real line) we have for the boundary values
\begin{align}\label{e-to0}
\frac{\rK_{f_\alpha\mu_\alpha}(x+iy)}{\rK_{\mu_\alpha}(x+iy)}\,\,\,
\stackrel{y\to 0}{\longrightarrow} \,\,0
\end{align}
for  $(\mu_\alpha)\ti{s}$-a.e. $x\in \oa$ and all $\alpha \in S$.
Therefore, there exists a set $M\subset\R$, $|M|>0$ such that for all $x\in M$ there exists an $\alpha$ such that \eqref{e-to0} is satisfied.

\medskip\noindent Using Lemma \ref{l-resolvent1} (below), the analytic function $$\frac{\rK_{f_\alpha\mu_\alpha}}{\rK_{\mu_\alpha}}=\frac{\rK_{f_0\mu_0}}{\rK_{\mu_0}}$$ has zero boundary values on $M$, a set of Lebesgue measure greater than zero.

\medskip\noindent Hence $$\frac{\rK_{f_0\mu_0}}{\rK_{\mu_0}}\equiv 0$$ and we must have $f_0\equiv 0$. But this contradicts the hypothesis that $f$ is a non-zero vector.
\end{proof}

\medskip\noindent Let us prove the lemma that was used in the above proofs. Recall the definition \eqref{e-F} of the Cauchy transform.

\subsection{A Corollary of Lemma \ref{l-resolvent1}}
Let us mention another consequence of Lemma \ref{l-resolvent1}, although we will not use this fact later in this paper. Consider a family of Aleksandrov--Clark measures $\{\sigma_\gamma\}_{\gamma\in\T}$.
\begin{cor}
 We have
\[
 \frac{\cK_{f_\gamma\sigma_\gamma}}{\cK_{f_1\sigma_1}}(z) = \frac{1-\te(z)}{1-\bar\gamma\te(z)}
\qquad  m-\text{a.e.~}z\in\C\backslash\T.
\]
In particular, the function
\begin{align}\label{e-inline}
\frac{\cK_{f_\gamma\sigma_\gamma}}{\cK_{f\sigma_1}}
\end{align}
is independent of the choice of $f = f_1\in L^2(\sigma_1)$.
\end{cor}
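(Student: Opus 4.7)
The plan is to deduce the corollary essentially directly from Lemma \ref{l-resolvent1}, once that lemma has been transferred from the self-adjoint half-plane setting to the unitary disk setting. As emphasized in Subsection \ref{ss-ACT}, the two frameworks are interchangeable via the Cayley transform, with $\rK$ replaced by $\cK$, the family $\mu_\alpha$ replaced by $\sigma_\gamma$, and the reference parameter $\alpha=0$ played by $\gamma=1$. Under this dictionary the second identity in \eqref{e-AK} reads
\[
\frac{\cK_{f_\gamma\sigma_\gamma}}{\cK_{\sigma_\gamma}} \;=\; \frac{\cK_{f_1\sigma_1}}{\cK_{\sigma_1}} \;=\; \frac{\cK_{f\sigma_1}}{\cK_{\sigma_1}},
\]
and rearranging gives
\[
\frac{\cK_{f_\gamma\sigma_\gamma}}{\cK_{f\sigma_1}} \;=\; \frac{\cK_{\sigma_\gamma}}{\cK_{\sigma_1}}.
\]
The right-hand side does not depend on $f$ at all, which already proves the \emph{in particular} assertion.

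It then remains to evaluate the ratio $\cK_{\sigma_\gamma}/\cK_{\sigma_1}$ in closed form. To this end I would invoke the standard Clark identity
\[
\cK_{\sigma_\gamma}(z) \;=\; \frac{1}{1-\bar\gamma\,\te(z)},
\]
which is a direct consequence of \eqref{thetamu} combined with the elementary observation $\frac{\xi+z}{\xi-z} = -1 + \frac{2}{1-\bar\xi z}$ valid for $|\xi|=1$ and the probability normalization $\sigma_\gamma(\T)=1$. Substituting for $\gamma$ and for $\gamma=1$, the desired formula
\[
\frac{\cK_{f_\gamma\sigma_\gamma}}{\cK_{f\sigma_1}}(z) \;=\; \frac{\cK_{\sigma_\gamma}(z)}{\cK_{\sigma_1}(z)} \;=\; \frac{1-\te(z)}{1-\bar\gamma\,\te(z)}
\]
follows at once.

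The only substantive step is transferring Lemma \ref{l-resolvent1} to the unitary setting, and this is where I expect the only (modest) obstacle to lie. The cleanest route is to repeat the argument of that lemma verbatim in the disk: starting from the second resolvent equation for the family $U_\gamma = U_1 + (\gamma-1)(\fdot, U_1^*\ID)\ID$ given by \eqref{d-ugamma}, apply both sides to $f$, pair with the cyclic vector $\ID \in L^2(\sigma_1)$, and reinterpret the resulting matrix elements as Cauchy transforms $\cK$ in $\D$ rather than as $\rK$ in the upper half-plane. The intertwining relations $\Phi_\gamma U_\gamma = M_z \Phi_\gamma$ and $\Phi_\gamma \Phi_1^* f = f_\gamma$ play the role that $V_\alpha$ plays in the proof of Lemma \ref{l-resolvent1}, and the calculation parallels the half-plane case line by line.
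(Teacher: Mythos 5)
Your proposal is correct and follows essentially the same route as the paper: both deduce the independence of \eqref{e-inline} from the choice of $f$ directly from Lemma \ref{l-resolvent1} (the second identity in \eqref{e-AK}, read in the disk via the Cayley-transform dictionary), and then obtain the closed form by cancelling the normalized Cauchy transforms and invoking the Clark identity $\cK_{\sigma_\gamma}=1/(1-\bar\gamma\te)$. The only difference is cosmetic: you make explicit the transfer of the lemma from the self-adjoint to the unitary setting, which the paper leaves implicit.
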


\begin{rem*}
 It was R.~G.~Douglas who observed the independence of the expression \eqref{e-inline} from the choice of $f\in L^2(\sigma_1)$.
\end{rem*}

\begin{proof}
In order to see the second statement notice that by Lemma \ref{l-resolvent1}, the expression \eqref{e-inline} is independent from the choice of $f\in L^2(\sigma_1)$, i.e.
\[
 \frac{\cK_{f_\gamma\sigma_\gamma}}{\cK_{f\sigma_1}}=\frac{\cK_{g_\gamma\sigma_\gamma}}{\cK_{g\sigma_1}}
\qquad\text{for all }f,g\in L^2(\sigma_1).
\]

\medskip\noindent We obtain the first statement by expanding
\[
 \frac{\cK_{f_\gamma\sigma_\gamma}}{\cK_{f\sigma_1}} = \frac{\frac{\cK_{f_\gamma\sigma_\gamma}}{\cK_{\sigma_\gamma}} \cK_{\sigma_\gamma}}{\frac{\cK_{f\sigma_1}}{\cK_{\sigma_1}}\cK_{\sigma_1}}
\]
and use Lemma \ref{l-resolvent1} for $g=\ID$ to cancel the fractions in the numerator and denominator. Further apply equation $$\cK_{\sigma_\gamma}(z) = \frac{1}{1-\bar\gamma \te(z)}$$ (confer of \cite{cimaross}) to the remaining $\cK_{\sigma_\gamma}$ in the numerator as well as $\cK_{\sigma_1}$ in the denominator.
\end{proof}

\subsection{Anderson-type Hamiltonians}\label{ss-Hw}

\medskip\noindent The following operator is a generalization of most Anderson models discussed in literature.

\medskip\noindent For $n=1,2,...$ consider the probability space $\Omega_n=(\R, \mathcal B, \mu_n)$, where $\mathcal B$ is the Borel sigma-algebra on $\R$ and $\mu_n$
is a Borel probability measure on $\Bbb R$. Let $\Omega=\prod_{n=1}^\infty \Omega_n$ be a product space with the probability measure $\p$ on
$\Omega$  introduced as the product measure of the corresponding measures on $\Omega_n$ on the product sigma-algebra $\mathcal A$. The elements of $\Omega$ are points in
$\R^\infty$,  $\omega=(\omega_1,\omega_2,...), \omega_n\in\Omega_n$.

\medskip\noindent Let $\cH$ be a separable Hilbert space. Consider a self-adjoint operator $H$ on $\cH$ and let $\f_1,\f_2,\hdots$ be a countable
collection of non-zero vectors in $\cH$. For each $\omega\in\Omega$ define an Anderson-type Hamiltonian on $\cH$ as a self-adjoint operator formally given by
\begin{equation}\label{Model}
H_\omega = H + V_\omega, \qquad V_\omega = \sum\limits_n \omega_n (\fdot, \f_n)\f_n.
\end{equation}

\medskip\noindent We will suppose that the operator $H_\omega$ is densely defined $\p$-almost surely. Except for degenerate cases, the perturbation $V_\omega$ is almost-surely a non-compact operator. It is hence not possible to apply results from classical perturbation theory to study the spectra of $H_\omega$, see e.g.~\cite{birst} and \cite{katobook}.

\medskip\noindent In the case of an orthonormal sequence $\{\f_n\}$, this operator was studied in \cite{JakLast2000} and \cite{JakLast2006}.

\medskip\noindent Probably the most important special case of an Anderson-type Hamiltonian is the discrete random Schr\"odinger operator on $l^2(\Z^d)$
\[
Hf(x)=-\bigtriangleup f (x) = - \sum\limits_{|n|=1} (f(x+n)-f(x)), \quad \f_n(x)=\delta_n(x)=
\left\{\begin{array}{ll}1&\text {if}     \,\,x=n\in\Z^d,\\ 0&\text{else.}\end{array}\right.
\]

\subsection{An application of Theorem \ref{t-GENCyc} to Anderson-type Hamiltonians}\label{ss-ApplAnd}
Let $H_\omega$ be the Anderson-type Hamiltonian introduced in equation \eqref{Model}. Fix $\omega_0\in \Omega$. Assume $\f\in \cH_{-1}(H_{\omega_0})$ is a cyclic vector for the self-adjoint operator $H_{\omega_0}$. Consider operators $H_{\omega_0} + \alpha (\fdot, \f)\f$, $\alpha\in \R$.

\medskip\noindent Then (by Theorem \ref{t-GENCyc}) any non-zero $f\in \cH$ is cyclic for $H_{\omega_0} + \alpha (\fdot, \f)\f$ for almost all $\alpha\in\R$.
In particular, for Lebesgue almost every $\alpha$, the operators $H_{\omega_0} + \alpha (\fdot, \f)\f$ are cyclic.

\medskip\noindent In the case where $\f\in\clos\spa\{\f_n\},\ \f=\sum a_n \f_n$, we say that $\f$ corresponds to the (possibly non-unique) sequence $\textbf{a} = (a_1,a_2,\hdots)$. Further, the operators $H_{\omega_0} + \alpha (\fdot, \f)\f$ correspond to $\omega$ belonging to the one dimensional affine subspace
$$l(\omega_0,\mathbf{a})=\{\omega_0+\alpha (a_1, a_2, a_3, \hdots)|\ \alpha\in\R\}.$$
Cyclicity of the operators for almost every $\omega$ in any one-dimensional affine subspace is a stronger statement than  $\p$-almost-sure cyclicity that can be found in the literature for some particular cases of our model. In terms of almost-sure cyclicity we obtain the following
result.

\medskip\noindent If $l(\omega_0,\mathbf{a})$ is a one-dimensional affine subspace of $\R^\infty$, one can introduce Lebesgue measure
on $l$ as $$m(S)=|\{\alpha|\ \omega_0+\alpha (a_1, a_2, a_3, \hdots)\in S\}|$$ for any Borel subset $S$ of $l$.


\begin{cor}\label{c-AndHam}
 Suppose that $H_{\omega_0}$ is self-adjoint for some $\omega_0$ and that $\sum a_n \f_n$, $\mathbf{a} = (a_1,a_2, \hdots)$, is cyclic for $H_{\omega_0}$.
Consider  a one-dimensional affine subspace of $\R^\infty$, $l=l(\omega_0,\mathbf{a})$.
Then  any non-zero vector $\f$ is cyclic for all $(H_\omega)\ti{ac},\ \omega\in l$, except possibly countably many $\omega$, and
cyclic for almost every $(H_\omega)\ti{s},\ \omega\in l$, with respect to Lebesgue measure on $l$.

\medskip\noindent In particular, suppose that the probability measure $\p$ is a product of absolutely continuous measures,  $H_\omega$ is self-adjoint $\p$-almost surely and some $\f_n$ is cyclic for  $H_\omega$, $\p$-almost surely. Then any non-zero $\f\in \cH$ is cyclic for $H_\omega$, $\p$-almost surely.
\end{cor}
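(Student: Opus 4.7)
The two assertions follow from Theorem \ref{t-GENCyc}, with the second part requiring an additional Fubini-type step.

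For the first assertion, the plan is to reduce directly to Theorem \ref{t-GENCyc}. Set $\f=\sum_n a_n\f_n$; by the standing setup of Subsection \ref{ss-Hw} and the hypothesis that $\f$ is cyclic for $H_{\omega_0}$, we have $\f\in\cH_{-1}(H_{\omega_0})$, and the discussion preceding the corollary identifies the restriction of the family $\{H_\omega\}_{\omega\in\Omega}$ to the affine line $l(\omega_0,\mathbf{a})$, under the parametrization $\omega=\omega_0+\alpha\mathbf{a}$, with the self-adjoint rank-one perturbation family
\[
A_\alpha=H_{\omega_0}+\alpha(\fdot,\f)\f,\qquad \alpha\in\R,
\]
where Lebesgue measure on $l$ agrees, up to a positive constant depending on $\mathbf{a}$, with the pushforward of Lebesgue measure on $\R$. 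Since $\f$ is cyclic for $A_0=H_{\omega_0}$, Theorem \ref{t-GENCyc} applies to the family $A_\alpha$: for any non-zero target vector in $\cH$, it is cyclic for $(A_\alpha)\ti{ac}$ outside a countable set of $\alpha$'s and cyclic for $(A_\alpha)\ti{s}$ outside a Lebesgue-null set of $\alpha$'s. Translated back along the line, this is exactly the first claim.

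For the second assertion, the plan is to combine part 1 with Fubini. Choose $n_0$ so that $\f_{n_0}$ is cyclic for $H_\omega$ for $\p$-a.e.~$\omega$, and take the direction $\mathbf{a}=e_{n_0}$, so that $\sum a_n\f_n = \f_{n_0}$. Decompose $\Omega=\Omega_{n_0}\times\Omega'$ with $\Omega'=\prod_{n\neq n_0}\Omega_n$ and $\p=\mu_{n_0}\otimes\p'$. By Fubini applied to the $\p$-full-measure set $\{\omega:\f_{n_0}\text{ is cyclic for }H_\omega\}$, for $\p'$-a.e.\ $\omega'\in\Omega'$ there exists some $\tilde\omega_{n_0}\in\Omega_{n_0}$ with $\f_{n_0}$ cyclic for $H_{(\tilde\omega_{n_0},\omega')}$. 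Fixing such an $\omega'$ and setting $\omega_0=(\tilde\omega_{n_0},\omega')$, part 1 applied along the line $l(\omega_0,e_{n_0})$ shows that the given non-zero $\f\in\cH$ is cyclic for both $(H_{(\omega_{n_0},\omega')})\ti{ac}$ and $(H_{(\omega_{n_0},\omega')})\ti{s}$ for Lebesgue-a.e.\ $\omega_{n_0}$. Since $\mu_{n_0}$ is absolutely continuous, Lebesgue-null exceptional sets are $\mu_{n_0}$-null, so $\f$ is cyclic for $H_{(\omega_{n_0},\omega')}$ for $\mu_{n_0}$-a.e.\ $\omega_{n_0}$. A second application of Fubini, integrating over $\omega'\in\Omega'$, upgrades this to cyclicity for $\p$-a.e.\ $\omega$.

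The only genuine subtlety I anticipate is the bookkeeping in the first step: one must verify that the identification $H_{\omega_0+\alpha\mathbf{a}}=H_{\omega_0}+\alpha(\fdot,\f)\f$ and the form-boundedness $\f\in\cH_{-1}(H_{\omega_0})$ are legitimate under the Anderson-type setup. Both are essentially inherited from the hypotheses, but should be checked once to justify the invocation of Theorem \ref{t-GENCyc}. With this in place, the Fubini argument for the second part is routine; the absolute continuity of the factors $\mu_n$ is genuinely used only to convert Lebesgue-a.e.\ conclusions along a coordinate line into $\mu_{n_0}$-a.e.\ conclusions, without which the exceptional set along the line might not be $\mu_{n_0}$-null.
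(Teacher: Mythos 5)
Your proof is correct and follows essentially the same route as the paper: part 1 is an immediate application of Theorem \ref{t-GENCyc} to the rank-one family along $l(\omega_0,\mathbf{a})$, and part 2 is exactly the paper's one-line observation that the coordinate lines in the direction $e_{n_0}$ cover $\p$-almost all of $\R^\infty$, which is precisely the Fubini argument you write out. Your version is merely more explicit about where the absolute continuity of the factors $\mu_n$ is used (to convert the Lebesgue-null exceptional set on each line into a $\mu_{n_0}$-null set), a point the paper leaves implicit.
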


\medskip\noindent It is well-known that if an Anderson-type Hamiltonian is singular almost-surely then it is cyclic almost-surely. The proof of almost-sure cyclicity of the singular part $(H_\omega)\ti{s}$ and almost-sure cyclicity of certain specific vectors can be found in \cite{JakLast2006} and for the discrete Schr\"odinger operator in \cite{Sim1994}. The second part of Corollary \ref{c-AndHam} extends (from the singular part to the full Anderson-type Hamiltonian $H_\omega$) these results showing that if one of the vectors $\f_n$ is almost sure cyclic then any non-zero vector possesses that property.

\begin{proof}[Proof of Corollary \ref{c-AndHam}]
The first statement follows immediately from Theorem \ref{t-GENCyc}.

\medskip\noindent Let $\omega$ be such that  $\f_n$ is a cyclic vector for $H_{\omega}$.
For $\mathbf{a}=(0,\hdots,0,a_n,0,\hdots)\in \R^\infty$ every $0\neq\f\in \cH$ is cyclic for a.e.~point in $l(\omega, \mathbf{a})$. Since the union of such
subspaces covers $\p$-almost all points of $\R^\infty$, we obtain the statement.
\end{proof}

\section{General Hermitian elements and rank-one perturbations}\label{s-GDHE}
\medskip\noindent Let $U$ be a unitary operator. For a vector $\f$ consider the space $X$ defined as the closure of the set of real finite linear combinations of elements of the form $$(U+U^*)^n \f\qquad \text{and}\qquad\frac{1}{i}(U-U^*)^n\f\qquad \text{for} \qquad n\in\Z.$$ Then a vector $f\in\cH$ is \emph{Hermitian} with respect to $U$ and the vector $\f$, if $f\in X$ and $f\perp\f$.

\medskip\noindent An analogous definition can be given for self-adjoint operators. For a bounded self-adjoint operator $A$ on a separable Hilbert space $\cH$ and a vector $\f$, let $(\re A) \f$ denote the closure of the space of linear combinations  of $A^n \f$, $n\in\N$ with real coefficients.
We say that a vector $f\in\cH$ is \emph{Hermitian} with respect to the operator $A$ and the vector $\f$, if $f\in (\re A)\f$ and $f\perp \f$.
For general (unbounded) operators $(\re A) \f$ can be defined as the closed span of
$$\left((A-zI)^{-1}+(A-\bar zI)^{-1}\right)\f,\ z\in \C_+.$$

\medskip\noindent Note that in the settings of Section \ref{ss-ACT}, the space $X$ defined above is the
set of Hermitian functions from $K_\theta$ defined there, see the proof of Theorem \ref{t-herm2} below.

\medskip\noindent Let $U$ be a unitary operator on a separable Hilbert space $\cH$. Consider the family
\begin{align}\label{f-Ugamma}
U_\gamma=U+(\gamma-1) ( \fdot, U^{-1}  b )\ci{\cH}  b
\end{align}
of rank-one perturbations, $\gamma\in \T$, $ b \in\cH$ with $\| b \|\ci\cH =1$. It is well known that $U_\gamma$ is unitary for all $\gamma\in \T$. Clearly we have $U=U_1$. Without loss of generality, assume that $ b $ is cyclic for $U$, i.e.
$$\clos\spa\{U^k  b :k\in\Z\}=\cH.$$

\begin{theo}\label{t-herm2}
Consider the family $U_\gamma$ of rank-one unitary perturbations given by \eqref{f-Ugamma}. Assume that $U_\gamma$  has purely singular spectrum
for some (all) $\gamma\in \T$. Let $0\neq f\in \cH$ be Hermitian with respect to $U=U_1$ and $b$. Fix a constant $c\in \C\backslash\{0\}$. Then the vector $f-cb$ is cyclic for $U_\gamma$ for all $\gamma\in \T\backslash\{e^{2i\arg c}\}$.
\end{theo}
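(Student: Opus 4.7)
The plan is to transport everything to the spectral representation of $U_\gamma$ via the Clark operator, where cyclicity becomes non-vanishing $\sigma_\gamma$-a.e., and then to exploit the phase rigidity of Hermitian elements provided by Theorem~\ref{t-Hermite}. Since $U_\gamma$ has purely singular spectrum and $b$ is cyclic for $U_1$, we are in the framework of Subsection~\ref{ss-ACT} with some inner function $\theta$ (normalized so that $\theta(0)=0$). The Clark operator $\Phi_\gamma:\cH\to L^2(\sigma_\gamma)$ intertwines $U_\gamma$ with $M_z$ and sends $b$ to $\ID$, so a vector $g\in\cH$ is cyclic for $U_\gamma$ if and only if $\Phi_\gamma g\neq 0$ $\sigma_\gamma$-a.e.

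First I would verify that the operator-theoretic Hermitian condition of Section~\ref{s-GDHE} coincides with the $K_\theta$-theoretic Hermitian condition of Subsection~\ref{ss-ACT}, so that Theorem~\ref{t-Hermite} is applicable. Under $\Phi_1$, the operators $U+U^*$ and $\frac{1}{i}(U-U^*)$ become multiplication by $z+\bar z$ and $(z-\bar z)/i$ on $\T$, which are real-valued. Hence $X$ is contained in the space of real-valued $L^2(\sigma_1)$-functions, and a short Stone--Weierstrass argument using the cyclicity of $b$ shows this inclusion is an equality. The condition $f\perp b$ translates to $\int\Phi_1 f\,d\sigma_1=0$. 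These are precisely the conditions in Theorem~\ref{t-Hermite} for $\gamma=1$, so $f$ is a Hermitian element of $K_\theta$, and the conclusion of that theorem holds for every $\gamma\in\T$.

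The heart of the proof is now essentially a one-line phase check. Theorem~\ref{t-Hermite} gives $\arg(\Phi_\gamma f)\equiv \arg\gamma/2 \pmod \pi$ $\sigma_\gamma$-a.e., which means that the values of $\Phi_\gamma f$ lie on the real line $L_\gamma:=e^{i\arg\gamma/2}\R\subset\C$. Since $\Phi_\gamma b=\ID$, we have
\[
\Phi_\gamma(f-cb)(z)=\Phi_\gamma f(z)-c.
\]
This can vanish on a set of positive $\sigma_\gamma$-measure only if $c\in L_\gamma$, equivalently $\arg c\equiv \arg\gamma/2 \pmod \pi$, i.e., $\gamma=e^{2i\arg c}$. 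For every other $\gamma\in\T$ the scalar $c$ sits off the line $L_\gamma$, so $\Phi_\gamma(f-cb)$ is nowhere zero $\sigma_\gamma$-a.e., and $f-cb$ is cyclic for $U_\gamma$.

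The principal technical obstacle I expect is the identification in the second paragraph: verifying that the operator-theoretic closure $X$ really captures all real-valued $L^2(\sigma_1)$-functions rather than some proper subspace, and that the orthogonality $f\perp b$ corresponds exactly to the mean-zero hypothesis in Theorem~\ref{t-Hermite}. Once this matching is in place, the remainder of the proof is formal.
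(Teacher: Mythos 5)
Your proof is correct and follows essentially the same route as the paper's, which packages the final phase check as a separate statement (Theorem \ref{t-herm1}) but otherwise proceeds identically: pass to the Clark representation, identify the operator-theoretic Hermitian condition with the $K_\theta$ one, and invoke the phase rigidity of Theorem \ref{t-Hermite}. One small remark: the claimed equality of $X$ with all real-valued functions in $L^2(\sigma_1)$ is not actually needed (and is delicate, since the span involves only separate powers of $\re z$ and $\im z$); the inclusion of $X$ in the real-valued functions, together with the ``for some $\gamma$'' direction of Theorem \ref{t-Hermite}, already shows $f$ is a Hermitian element of $K_\theta$, which is all the argument requires.
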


\subsection{Proof of Theorem \ref{t-herm2}}\label{ss-proof} Via the spectral theorem, without loss of generality one can assume that $U=U_1$
is an operator of multiplication by $z$ in $L^2(\sigma_1)$, $\sigma_1$ is the spectral measure of $U$ corresponding to $b$ and $b=\ID\in L^2(\sigma_1)$.
Define the inner function $\theta$ so that $\sigma_1$ is its Clark measure.
Let  $K_\theta$, $\gamma$, $\sigma_\gamma$ as well as $\Phi_\gamma$ and $\Phi^\ast_\gamma$ be as  defined in Subsection \ref{ss-ACT}.
Consider operators
$\widetilde U_\gamma$ defined in \eqref{unit}. Then $U_\gamma=\widetilde U_\gamma$.

\medskip\noindent Recall that $f\in K_\theta$ is called Hermitian if $f=\widetilde f$, where $\widetilde f = \theta \bar f$.

\medskip\noindent Using Theorem \ref{t-Hermite} one can show that this definition is equivalent to that of a Hermitian element with respect to operator $U=\widetilde U_1$ and vector $b=S^*\theta$. The condition $f(0)= 0$ (for $f\in K_\te$) is translated  into $f\perp b$.

\medskip\noindent Recall that by Theorem \ref{t-P}, the non-tangential limit of $f\in K_\theta$ exist $\sigma_\gamma$-a.e.~for all $\gamma\in\T$. Let us denote this non-tangential limit by $$f_\gamma(z)=\lim_{\xi\to z} f(\xi) \qquad \sigma_\gamma\text{-a.e.}.$$ In fact, we can identify these boundary values with one function, say $f$ (slightly abusing notation), on the circle which is defined $\sigma_\gamma$-a.e.~for all $\gamma$. Indeed, we restricted ourselves to purely singular measures and by the Aronszajn--Donoghue theorem, Theorem \ref{t-AD}, we have $\sigma_\gamma\perp\sigma_\eta$ for $\gamma\neq\eta$. Our statement now follows from
Theorem \ref{t-herm1} below.

\begin{theo}\label{t-herm1}
Let $0\neq f\in K_\theta$ be a Hermitian function and fix any constant $c\in \C\backslash\{0\}$. Then the level sets $$\{z\in\T : f(z)=c\}$$ have zero $\sigma_\gamma$-measure for all $\gamma\in \T\backslash\{e^{2i\arg c}\}$.
In particular, the function $f-c$ is cyclic for $\widetilde U_\gamma$ for all $\gamma\in \T\backslash\{e^{2i\arg c}\}$.
\end{theo}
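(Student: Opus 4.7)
The plan is to reduce the statement about cyclicity to the level set statement, and then deduce the level set statement from the argument description of Hermitian elements provided by Theorem \ref{t-Hermite}.

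\textbf{Step 1: Reduction to the level set claim.} Since we assume $\theta(0)=0$, the constant function $c$ lies in $K_\theta$, so $f-c\in K_\theta$. Under $\Phi_\gamma$, the operator $\widetilde U_\gamma$ is unitarily equivalent to multiplication by $z$ on $L^2(\sigma_\gamma)$, and a vector is cyclic for $M_z$ on $L^2(\sigma_\gamma)$ exactly when it is non-zero $\sigma_\gamma$-a.e.\ (see Subsection \ref{pre-normal}). By Theorem \ref{t-P}, $\Phi_\gamma$ sends an element of $K_\theta$ to its boundary values $\sigma_\gamma$-a.e. Thus $\Phi_\gamma(f-c)$ agrees $\sigma_\gamma$-a.e.\ with the boundary value function $f-c$ on $\T$, and cyclicity of $f-c$ is equivalent to $\sigma_\gamma(\{z\in\T:f(z)=c\})=0$. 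So it suffices to prove the statement about level sets.

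\textbf{Step 2: Using the argument of a Hermitian element.} By Theorem \ref{t-Hermite}, since $f$ is Hermitian, for every $\gamma\in\T$ we have
\[
\arg(\Phi_\gamma f)\equiv \frac{\arg\gamma}{2}\pmod{\pi},\qquad \sigma_\gamma\text{-a.e.}
\]
Geometrically, the boundary values of $f$ lie, $\sigma_\gamma$-a.e., on the real line $L_\gamma:=e^{i\arg\gamma/2}\,\R$ through the origin.

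\textbf{Step 3: Geometric exclusion of $c$.} The constant $c=|c|e^{i\arg c}$ belongs to $L_\gamma$ if and only if $\arg c\equiv \arg\gamma/2\pmod{\pi}$, that is, $\arg\gamma\equiv 2\arg c\pmod{2\pi}$, or equivalently $\gamma=e^{2i\arg c}$. Therefore, whenever $\gamma\neq e^{2i\arg c}$, we have $c\notin L_\gamma$, and the level set $\{z\in\T:f(z)=c\}$ is contained in the $\sigma_\gamma$-null set on which the argument identity above fails. This gives $\sigma_\gamma(\{f=c\})=0$, completing the level set claim and hence, by Step 1, the cyclicity statement.

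\textbf{Main obstacle.} The argument itself is short once Theorem \ref{t-Hermite} is available; the delicate point is to make sure that the boundary value function appearing in Theorem \ref{t-Hermite} is the same object as $\Phi_\gamma f\in L^2(\sigma_\gamma)$, and that the Hermitian property is preserved after subtracting the constant $c$ in the sense relevant for cyclicity (it is not, but we only need the argument description of $f$ itself, not of $f-c$). Once this identification is made carefully—via Theorem \ref{t-P} and the fact that the measures $\sigma_\gamma$ are mutually singular, so the boundary values may be unambiguously regarded as a single function on $\T$—the geometric argument in Step 3 delivers the conclusion immediately.
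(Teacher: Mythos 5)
Your proposal is correct and follows essentially the same route as the paper: the paper's proof likewise applies Theorem \ref{t-Hermite} to get $\arg f=\frac{\arg\gamma}{2}\ (\Mod\pi)$ $\sigma_\gamma$-a.e.\ and concludes that $f=c$ on a set of positive $\sigma_\gamma$-measure forces $\gamma=e^{2i\arg c}$. Your Steps 1 and the closing identification remarks simply make explicit the reduction from cyclicity to the level-set claim that the paper leaves implicit.
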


\begin{proof}[Proof of Theorem \ref{t-herm1}]
Pick $f$ and $c$ according to the hypotheses of the theorem.
By Theorem \ref{t-Hermite} we have $$\arg f = \frac{\arg\gamma}{2}(\Mod \pi)$$ with respect to $\sigma_\gamma$ almost everywhere.

\medskip\noindent If $\gamma$ is such that $f=c$  on a set $S\subset\T$ with $\sigma_\gamma(S)>0$, then we have $$\frac{\arg\gamma}{2} (\Mod \pi)= \arg c$$ and therefore $\gamma = e^{2i\arg c}$.
\end{proof}

\begin{rem*}
In the statement of Theorem \ref{t-herm1}, and therefore Theorem \ref{t-herm2}, the constant $c$ cannot be equal to $0$. Indeed, consider $K_{z^n}$ that consists of all polynomials of degree less than $n$. Let $\beta_1,...\beta_{n-2}$ be points on $\T$ such that
$\beta_k^n=\gamma_k$ are different points. Let $$p(z)=a_{n-1}z^{n-1}+...+a_1z$$ be a polynomial with roots at $0,\beta_1,...\beta_{n-2}$.
Then $$\tilde p(z)=\bar a_{1}z^{n-1}+...+\bar a_{n-1}z$$ has roots at the same points. Notice that $p+\tilde p $ is a Hermitian element of $K_{z^n}$
whose zero set $Z=\{0,\beta_1,...\beta_{n-2}\}$ satisfies $\sigma_{\gamma_k}(Z)=1/n>0$ for $k=1,2,...,n-2$.
\end{rem*}

\medskip\noindent Let us mention the following examples that illustrate Theorem \ref{t-herm1}.

\bigskip
\medskip\noindent {\bf Level sets of self-reciprocal polynomials.}
 A polynomial
\[
 p(z)=\sum_{m=0}^{n-1}a_m z^m
\]
 of degree less than $n$  is  a Hermitian element of $K_{z^n}$,
  if and only if
  \begin{align}\label{c-coeff}
  a_0=0\qquad \text{and}\qquad a_m = \overline{a_{n-m}},\qquad m=1,\hdots, n-1.
  \end{align}
  Polynomials that satisfy \eqref{c-coeff} are called \emph{self-reciprocal}.
  Note that the Clark measure $\sigma_\gamma$ of $\theta=z^n$ is concentrated on
  the set of $n$-th roots of $\gamma$.

  \medskip\noindent \textit{Hence, if $c\neq 0$ and $p$ is a self-reciprocal polynomial, then by
 Theorem \ref{t-herm1} all roots of the equation $p=c$ on $\T$ must  be contained
 in a set of $n$-th roots of $\gamma$ for  $\gamma\in\T$ given in the statement of the theorem.
}

\medskip\noindent Naturally, this simple fact can also be proved directly. If $z$ is such that $p(z)=c$ then
$$c = p(z) = z^n \overline{p(z)} = z^n \bar c.$$
 Hence $z^n  = c/\bar c$ where $|c/\bar c| = 1$ and $\arg (c/\bar c) = 2\arg c$. The proof of
 Theorem \ref{t-herm1} can be viewed as a generalization of this argument.

\bigskip
\medskip\noindent {\bf The non-zero level sets of Paley--Wiener functions.}
Recall the definition of Paley--Wiener functions from Subsection \ref{ss-PAL}.

\medskip\noindent The following statement is an analog to Euler's Formula $e^{i\theta} = \cos\theta+i\sin\theta$ for Paley--Wiener functions.

\begin{prop}\label{t-Euler}
Let $f\in \PW_a$.
Then $e^{iaz}f=g_1 +i g_2$ where $g_1,g_2$ are entire functions such that
$g_1,g_2\in e^{iaz}\PW_a$ and each level set
$$\{x\in\R| \ g_i(x)=c\},\ i=1,2;\ c\neq 0,$$
is contained in the arithmetic progression
$$\left\{2\arg c+\frac {2\pi n}a\right\}_{n\in\Z}.$$
\end{prop}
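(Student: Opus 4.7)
The plan is to transport the Hermitian decomposition from the model space $K_{\theta_a}$ to the Paley--Wiener setting via the maps recalled in Subsection \ref{ss-PAL}, and then invoke the argument of Theorem \ref{t-herm1}.

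First, set $h(z) := e^{iaz}f(z)$; by the inclusion $\PW_a \subset e^{-iaz}K^\R_{\vartheta_a}$, this is an entire function lying in $K^\R_{\vartheta_a}$. Denote by $\widetilde h$ the conjugate-linear Hermitian involution on $K^\R_{\vartheta_a}$ transported from $K_{\theta_a}$; on the real line it is given by $\widetilde h(x) = e^{2iax}\,\overline{h(x)}$. Define
\[
g_1 := \frac{h + \widetilde h}{2}, \qquad g_2 := \frac{h - \widetilde h}{2i}.
\]
A direct check using $\widetilde{\widetilde h} = h$ shows that $g_1$ and $g_2$ are Hermitian elements of $K^\R_{\vartheta_a}$, and plainly $h = g_1 + ig_2$.

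Next, I claim $g_j \in e^{iaz}\PW_a$. On $\R$, a direct computation using the formula for $\widetilde h$ yields
\[
g_1(x) = e^{iax}\,\re f(x), \qquad g_2(x) = e^{iax}\,\im f(x).
\]
Thus $e^{-iaz}g_j$ is entire of exponential type at most $a$ and its boundary values on $\R$ are dominated in modulus by $|f|$, hence lie in $L^2(\R,dx)$. By the Paley--Wiener characterization, $e^{-iaz}g_j \in \PW_a$, as desired.

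Finally, for the level sets: Hermiticity of $g_j$ says exactly that $e^{-iax}g_j(x)$ is real-valued for $x \in \R$. If $g_j(x)=c$ with $c\ne 0$, then $c = e^{iax}t$ for some nonzero real $t$, forcing $ax \equiv \arg c \pmod\pi$, which places $x$ in the prescribed arithmetic progression. This is precisely the real-line analogue of the short level-set argument in the proof of Theorem \ref{t-herm1}: the Clark measures of the inner function $\vartheta_a$ are discrete measures supported on sets of the form $\{x \in \R: e^{2iax}=\gamma\}$, and the value $\gamma = e^{2i\arg c}$ selects the progression in question. The one technical point that needs care is the verification $g_j \in e^{iaz}\PW_a$, not merely $g_j \in K^\R_{\vartheta_a}$; this is exactly where the codimension-one statement at the end of Subsection \ref{ss-PAL} is invoked, together with the pointwise bounds $|\re f|,|\im f|\le |f|$ to upgrade from $L^2((1+x^2)^{-1}dx)$ to $L^2(dx)$ boundary values. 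Everything else is a routine transcription of unit-disk statements to the half-plane.
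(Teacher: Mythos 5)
Your argument is correct and follows the same skeleton as the paper's proof: decompose $e^{iaz}f$ into $g_1+ig_2$ using the involution $h\mapsto\vartheta_a\overline{h}$ and then read the level-set constraint off the resulting reality condition. The differences are in how the two key steps are executed, and in both places your route is the more self-contained one. For membership, you compute explicitly that on $\R$ one has $g_1=e^{iax}\re f$ and $g_2=e^{iax}\im f$, so that $e^{-iaz}g_j$ equals $\frac{f+f^*}{2}$ or $\frac{f-f^*}{2i}$ with $f^*(z)=\overline{f(\bar z)}$, an entire function of exponential type at most $a$ with $L^2(\R)$ boundary values, hence in $\PW_a$ by the Paley--Wiener characterization; the codimension-one remark you invoke at the end is actually superfluous once this is said. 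The paper instead asserts that the transplants of $g_1,g_2$ are Hermitian elements of $K_{\theta_a}$ and appeals to Theorem \ref{t-herm1}; this glosses over the fact that $h\mapsto\theta\bar h$ maps $K_\theta$ into itself only for functions vanishing at the origin, i.e.\ one would need $e^{iaz}f$ to vanish at $\psi^{-1}(0)=i$, which a general $f\in\PW_a$ need not do --- your direct real-line verification sidesteps this issue entirely. For the level sets, you use the elementary observation that $e^{-iax}g_j(x)$ is real, so $g_j(x)=c\neq0$ forces $ax-\arg c\in\pi\Z$; this is exactly the ``direct'' version of the argument that the paper itself records after the self-reciprocal polynomial example, in place of a formal appeal to Theorem \ref{t-herm1} and the Clark measures of $\vartheta_a$. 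One shared caveat: both your derivation and the paper's actually produce the progression $\left\{\frac{\arg c+\pi n}{a}\right\}_{n\in\Z}$, which is not literally the displayed set $\left\{2\arg c+\frac{2\pi n}{a}\right\}_{n\in\Z}$; the discrepancy is a slip in the statement (it should read $\left\{\frac{2\arg c+2\pi n}{2a}\right\}_{n\in\Z}$), not a defect of your proof.
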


\begin{proof}
Recall that $K_{\vartheta_a}^\R$ was defined as the function space obtained by ``mapping"  the model space $K_\te$, where $$\te_a(z) = \te(z) = e^{-2a\frac{1+z}{1-z}}, \qquad0<a\le 1$$ from $\D$ to $\R$ using the standard conformal map $\psi:\C_+\to\D$, see Section \ref{ss-PAL}.
Then $$e^{iaz}\PW_a\subset K_{\vartheta_a}^\R.$$ Hence we have $e^{iaz}f\in K_\vartheta^\R$.

\medskip\noindent Without loss of generality $a=1$. For the inner function $\te = e^{-\frac{1+z}{1-z}}$ in $\D$, the Clark measure $(\sigma_\gamma)_{\gamma\in \T}$
is concentrated on the sequence $$\psi(\{\arg\gamma+2\pi n\})\subset\T.$$

\medskip\noindent Like in the remark following Theorem \ref{t-herm1}, we can decompose $f$ into $f = g_1+i g_2$ where the translations $\tilde g_1$ and $\tilde g_2$ of the functions $g_1$ and $g_2$ from the upper half plane to the disc are Hermitian in $K_\theta$.
By Theorem \ref{t-herm1}
\[
 \{z\in\T: \tilde g_i = c\} \subset \psi(\{\arg\gamma+2\pi n\}),\ \arg\gamma=2\arg c.
\]

\medskip\noindent Hence the level sets of the functions $g_i$ are contained in  arithmetic progressions given in the statement.
\end{proof}

\providecommand{\bysame}{\leavevmode\hbox to3em{\hrulefill}\thinspace}
\providecommand{\MR}{\relax\ifhmode\unskip\space\fi MR }
\providecommand{\MRhref}[2]{%
  \href{http://www.ams.org/mathscinet-getitem?mr=#1}{#2}
}
\providecommand{\href}[2]{#2}

\end{document}